\theoremstyle{plain}
\newtheorem{mythe}{Theorem}
\theoremstyle{remark}
\theoremstyle{plain}
\theoremstyle{remark}
\theoremstyle{plain}
\theoremstyle{remark}
\theoremstyle{remark}
\theoremstyle{remark}
\theoremstyle{remark}
\theoremstyle{remark}
\theoremstyle{remark}
\def\UrlSpecials{\do\~{\kern -.15em\lower .7ex\hbox{~}\kern .04em}} \catcode`~=13
\newcommand{\calC}{\mathcal{C}}
\newcommand{\calD}{\mathcal{D}}
\newcommand{\calF}{\mathcal{F}}
\newcommand{\calN}{\mathcal{N}}
\newcommand{\calO}{\mathcal{O}}
\newcommand{\calR}{\mathcal{R}}
\newcommand{\calU}{\mathcal{U}}
\newcommand{\calV}{\mathcal{V}}
\newcommand{\ba}{\mathbf{a}}
\newcommand{\bA}{\mathbf{A}}
\newcommand{\boldf}{\mathbf{f}}
\newcommand{\bp}{\mathbf{p}}
\newcommand{\bv}{\mathbf{v}}
\newcommand{\bbE}{\mathbb{E}}
\DeclareMathAlphabet{\mathbsf}{OT1}{cmss}{bx}{n}
\DeclareMathAlphabet{\mathssf}{OT1}{cmss}{m}{sl}
\DeclareSymbolFont{bsfletters}{OT1}{cmss}{bx}{n}
\DeclareSymbolFont{ssfletters}{OT1}{cmss}{m}{n}
\DeclareMathSymbol{\bsfGamma}{0}{bsfletters}{'000}
\DeclareMathSymbol{\ssfGamma}{0}{ssfletters}{'000}
\DeclareMathSymbol{\bsfDelta}{0}{bsfletters}{'001}
\DeclareMathSymbol{\ssfDelta}{0}{ssfletters}{'001}
\DeclareMathSymbol{\bsfTheta}{0}{bsfletters}{'002}
\DeclareMathSymbol{\ssfTheta}{0}{ssfletters}{'002}
\DeclareMathSymbol{\bsfLambda}{0}{bsfletters}{'003}
\DeclareMathSymbol{\ssfLambda}{0}{ssfletters}{'003}
\DeclareMathSymbol{\bsfXi}{0}{bsfletters}{'004}
\DeclareMathSymbol{\ssfXi}{0}{ssfletters}{'004}
\DeclareMathSymbol{\bsfPi}{0}{bsfletters}{'005}
\DeclareMathSymbol{\ssfPi}{0}{ssfletters}{'005}
\DeclareMathSymbol{\bsfSigma}{0}{bsfletters}{'006}
\DeclareMathSymbol{\ssfSigma}{0}{ssfletters}{'006}
\DeclareMathSymbol{\bsfUpsilon}{0}{bsfletters}{'007}
\DeclareMathSymbol{\ssfUpsilon}{0}{ssfletters}{'007}
\DeclareMathSymbol{\bsfPhi}{0}{bsfletters}{'010}
\DeclareMathSymbol{\ssfPhi}{0}{ssfletters}{'010}
\DeclareMathSymbol{\bsfPsi}{0}{bsfletters}{'011}
\DeclareMathSymbol{\ssfPsi}{0}{ssfletters}{'011}
\DeclareMathSymbol{\bsfOmega}{0}{bsfletters}{'012}
\DeclareMathSymbol{\ssfOmega}{0}{ssfletters}{'012}
\newcommand{\bPhi}{\bm{\Phi}}
\def\norm#1{\left\| #1 \right\|}
\def\norm2#1{\left\| #1 \right\|_2}
\def\norm22#1{\left\| #1 \right\|_2^2}
\DeclareMathOperator{\diag}{diag}
\newcommand{\qednew}{\nobreak \ifvmode \relax \else
      \ifdim\lastskip<1.5em \hskip-\lastskip
      \hskip1.5em plus0em minus0.5em \fi \nobreak
      \vrule height0.75em width0.5em depth0.25em\fi}
\title{Reconfigurable Intelligent Surface Empowered Device-to-Device Communication Underlaying Cellular Networks}
\author{Gang~Yang, \emph{Member, IEEE}, Yating Liao, \emph{Student Member, IEEE}, Ying-Chang~Liang, \emph{Fellow, IEEE}, and Olav Tirkkonen, \emph{Member, IEEE} \\
\thanks{G.~Yang and Y. Liao  are with the National Key Laboratory of Science and Technology on Communications, and the Center for Intelligent Networking and Communications (CINC), University of Electronic Science and Technology of China (UESTC), Chengdu 611731, China (e-mails: yanggang@uestc.edu.cn, 2015010913035@std.uestc.edu.cn).}
\thanks{Y.-C. Liang is with the Center for Intelligent Networking and Communications (CINC), University of Electronic Science and Technology of China (UESTC), Chengdu 611731, China (e-mail: liangyc@ieee.org). (\emph{Corresponding author: Y.-C. Liang.})}
\thanks{O. Tirkkonen is with the Aalto University, Aalto, Finland. (e-mail:olav.tirkkonen@aalto.fi)}}
\begin{document}
\maketitle
\vspace{-1.5cm}
\begin{abstract}
Reconfigurable intelligent surface (RIS) is a  new and revolutionary technology to achieve spectrum-, energy- and cost-efficient wireless networks.  This paper studies the resource allocation for RIS-empowered device-to-device (D2D) communication underlaying a cellular network, in which an RIS is employed to enhance desired signals and suppress interference between paired D2D and cellular links. We maximize the overall network's spectrum efficiency (SE) (i.e., sum rate of D2D users and cellular users) and energy efficiency (EE), respectively, by jointly optimizing the resource reuse indicators, the transmit power and the RIS's passive beamforming, under the signal-to-interference-plus-noise ratio constraints and other practical constraints. To solve the non-convex problems, we first propose an efficient user-pairing scheme based on relative channel strength to determine the resource reuse indicators. Then, the transmit power and the RIS's passive beamforming are jointly optimized to maximize the SE by a proposed iterative algorithm, based on the techniques of alternating optimization, successive convex approximation, Lagrangian dual transform and quadratic transform. Moreover, the EE-maximization problem is solved by an alternating algorithm integrated with Dinkelbach's method. Also, the convergence and complexity of both algorithms are analyzed. Numerical results show that the proposed design achieves significant SE and EE enhancements compared to traditional underlaying D2D network without RIS and other benchmarks.
\end{abstract}
\begin{IEEEkeywords}
Device-to-device communication, reconfigurable intelligent surface, spectrum efficiency optimization, energy efficiency optimization, resource allocation, passive beamforming.
\end{IEEEkeywords}
\section{Introduction}
\subsection{Motivation}
Device-to-device (D2D) communication underlaying a cellular network, which allows a device to communicate with its  proximity device over the licensed cellular bandwidth, is recognized as a promising wireless technology \cite{Jameel2018} and a competitive candidate for evolved/beyond 5th-Generation (5G) system standards\cite{Ghosh2019}. Specifically, the overall network's spectrum efficiency (SE) can be enhanced, since additional D2D links are supported by sharing the licensed cellular spectrums; the overall network's energy efficiency (EE) can be improved by exploiting the proximity of D2D users; also, the transmission delay can be reduced by eliminating the forwarding through a cellular base station (BS). However, interference management is one of the most important challenges for underlaying D2D communication\cite{Jameel2018,Ghosh2019,Naderializadeh2014}. The D2D link and the cellular link operating in the same licensed band interfere with each other severely\cite{DSABook2019Liang}, and the interference needs to be carefully suppressed via efficient interference control\cite{Sun2015D2D}\cite{Wang2018} and resource allocation\cite{Yang2016}\cite{Chen2018}. Existing interference management schemes were designed under the fact that wireless environment including interference channels is fixed. Thus the extent of interference suppression is fundamentally limited.


Recently, reconfigurable intelligent surface (RIS) has emerged as a new and revolutionary technology to achieve spectrum-, energy- and cost-efficient wireless networks\cite{liaskos2018new,di2019smart,Liang2019}. An RIS consists of a large number of passive low-cost reflecting elements, each of which can adjust the phase and amplitude of the incident electromagnetic wave in a software-defined way and reflect it passively\cite{huang2020holographic}. Thus, RIS is able to enhance desired signals and suppress interference by designing passive beamforming (i.e., changing each reflecting element's reflecting coefficient including amplitude and phase). In particular, a typical architecture of RIS consists of a smart controller and three layers (i.e., reflecting element, copper backplane, and control circuit board) \cite{liaskos2018new}. The controller attached to RIS can intelligently adjust the reflecting coefficients and communicate with other network components. Hence, it is realizable to intentionally reconfigure the wireless propagation environment and thus fundamentally improve the interference management level for underlaying D2D communication.

RIS can be explored to not only suppress the severe interference between each paired D2D link and cellular link, but also enhance the strength of desired signals for both D2D links and cellular links. This motivates us to study RIS-empowered D2D communication underlaying a cellular network as shown in Fig.~\ref{fig:system_newmodel}, which consists of multiple D2D pairs and multiple cellular users (CUs), as well as an RIS. This has not been studied in the literature to our best knowledge.

%
\subsection{Related Works}
\subsubsection{D2D Communication}
D2D communication systems have been widely studied in \cite{yu2011resource, feng2013device,Kai2019,MirzaJoint2018,cui2019spatial, Wang2015,Feng2015, Chun2017,WangD2D2019 ,Penda2019,Tang2017,Mozaffari2016}. Thereinto, interference management and performance analysis were investigated in \cite{yu2011resource, feng2013device,Kai2019,MirzaJoint2018,cui2019spatial, Wang2015,Feng2015, Chun2017}. For both direct D2D communication network and D2D communication underlaying cellular networks, the throughput over the shared resources was maximized in \cite{yu2011resource} by optimizing the resource allocation under the quality-of-service (QoS) requirements for both D2D users and CUs. For D2D communication underlaying cellular networks, the overall network's SE was maximized in \cite{feng2013device} by jointly optimizing the resource reuse indicators and transmit power. For D2D communication underlaying an orthogonal-frequency-division-multiplexing (OFDM) cellular network, the average ergodic sum rate over D2D pairs' locations was maximized in \cite{Kai2019} by jointly optimizing the subcarrier assignment and power allocation. For D2D communication underlaying a multiuser multiple-input multiple-output (MU-MIMO) cellular network, the total transmit power of the overall network was minimized in \cite{MirzaJoint2018}, by jointly optimizing the BS's transmit beamforming and the transmit power of both BS and D2D transmitters. For a direct D2D communication network, a deep learning approach was proposed in \cite{cui2019spatial} to maximize the overall network's SE by optimizing the scheduling of D2D links. The overall EE of an underlaying D2D network, which allows multiple D2D users pair with a CU, was maximized in \cite{Wang2015} by jointly optimizing the resource reuse indicators and power allocation. The overall EE was maximized in \cite{Feng2015} for dedicated transmission mode, reusing transmission mode and cellular transmission mode, while considering the circuit power consumption and the QoS requirements for D2D users and CUs. The performance of an underlaying D2D network over fading channels was analyzed in \cite{Chun2017} by leveraging a stochastic geometric approach.

D2D communication has also been incorporated with other advanced wireless communication technologies. For example, a two-phase cooperative transmission scheme was proposed in \cite{WangD2D2019} for D2D communication underlaying cellular networks, in which the policy of dynamic precoding and power allocation was designed. For  full-duplex D2D underlaying cellular networks, two cooperative modes based on network MU-MIMO and sequential forwarding, respectively, were proposed in \cite{Tang2017} to achieve both proximity gain and resource-reuse gain. The coverage and rate performance of UAV communication with underlaying D2D users were investigated in \cite{Mozaffari2016}.


\subsubsection{Wireless Communication with RIS}
Wireless communication systems with RIS can be divided into two categories, i.e., RIS-based transceiver design and RIS-assisted wireless communication. For the former, the RISs are utilized as transmit antennas and receive antennas to significantly reduce the hardware cost of traditional wireless transceivers\cite{Di2020smart}. Most literature were under the latter category, in which RIS is applied to improve the performance of wireless systems. RIS resembles but differs from existing technologies like full-duplex amplify-and-forward (AF) relaying and backscatter communications. A full-duplex AF relay actively processes the received signals and transmits the amplified signals, introducing additional noise and several interference at the relay; while an RIS passively reflects the signals instead of amplification, thus it avoids consuming power for active transmission and no noise or self-interference is introduced\cite{EmilIRSDFWCL2020}. Backscatter communication enables a tag to deliver its own information to a receiver by intentionally switching the antenna's load impedances\cite{YangLiangZhangPeiTCOM17}\cite{YangLiangZhangIoTJ18}, while an RIS is used to enhance the existing communication link performance.


RIS-assisted wireless communication was extensively studied in the prior works. For example, the weighted sum rate of an RIS-aided multiuser multiple-input single-output (MISO) downlink system was maximized in \cite{Guo2020}, by jointly optimizing the BS's active beamforming and the RIS's  passive beamforming (i.e., reflecting coefficients). The ergodic sum rate of an RIS-assisted MISO system was maximized in \cite{Huang2020} through deep reinforcement learning, by jointly optimizing the BS's transmit beamforming and the RIS's phase shifts. In \cite{Jung2019}, the distribution of SE was asymptotically analyzed and the reliability was verified for an RIS-empowered uplink system. For an RIS-assisted downlink non-orthogonal-multiple-access system, the max-min rate performance was optimized  in\cite{YangNOMA2020}. The EE of an RIS-empowered downlink multiuser system was maximized in \cite{8741198}, by jointly optimizing the BS's transmit power and the RIS's passive beamforming. The minimum secrecy rate of an RIS-assisted MISO system was maximized in \cite{JChen2019}.
\subsection{Contributions}
In this paper, we study the resource allocation for an RIS-empowered underlaying D2D communication network as shown in Fig.~\ref{fig:system_newmodel}. This work is an extension of the conference-version paper \cite{RISD2DsubmittedGC2020}. The main contributions are summarized as follows
\begin{itemize}
  \item We formulate a problem to maximize the overall network's SE (i.e., sum rate of both D2D users and CUs), by jointly optimizing the resource reuse indicators (i.e., user pairing between D2D users and CUs), the transmit power and the RIS's passive beamforming, subject to the signal-to-interference-plus-noise ratio (SINR) constraints for both D2D links and cellular links as well as other practical constraints. However, the problem is challenging to be solved optimally, since the user pairing (involving integer variables) and the resource allocation are closely coupled. 
  \item To decouple the SE-maximization problem, we first propose an efficient relative-channel-strength (RCS) based user-pairing scheme with low complexity. Under the obtained user-pairing design, an iterative algorithm based on alternating optimization (AO) is further proposed. Specifically, for given passive beamforming, the successive convex approximation (SCA) is exploited to optimize the transmit power; while for given transmit power, the Lagrangian dual transform and quadratic transform are exploited to solve the resulting multiple-ratio fractional programming problem. The algorithm's convergence and complexity are also analyzed.
  \item Based on the proposed model of RIS power consumption, we formulate a problem to maximize the overall network's EE, by jointly optimizing the resource reuse indicators, the transmit power and the RIS's passive beamforming, subject to the same constraints as the SE-maximization problem. To solve this non-convex problem, the proposed RCS based user-pairing scheme is first utilized to determine the resource reuse indicators, and an AO-based algorithm integrated with Dinkelbach's method is then proposed to optimize the transmit power and  RIS's passive beamforming iteratively. The algorithms's convergence and complexity are also analyzed.
  \item Numerical results show that the proposed design achieves significant SE and EE enhancements compared to traditional underlaying D2D without RIS, and suffers from slight degradation compared to the best-achievable performance under ideal user pairing. A 2-bit quantized phase shifter achieves sufficient SE enhancement compared to the ideal case of continuous phase shifter, and the highest EE for practical cases of finite-resolution phase shifter. The effects of main parameters on the performances are numerically verified, such as number of reflecting elements and CUs, and CUs' minimum rate requirement.
\end{itemize}

\subsection{Organization and Notations}
The rest of this paper is organized as follows. Section~\ref{systemmodel} presents the system model for RIS-empowered underlaying D2D communication network. Section~\ref{formulation} formulates the SE maximization problem. Section~\ref{solution} proposes an RCS based user-pairing scheme and an efficient iterative algorithm to solve the SE-maximization problem. Section~\ref{energyefficiency} formulates and solves the EE-maximization problem. Section~\ref{simulation} provides numerical results. Section~\ref{conslusion} concludes this paper.

The main notations are as follows. Denote scalars, vectors and matrices by italic letters, bold-face lower-case letters  and bold-face upper-case letters, respectively, e.g., $a$, $\ba$, $\bA$. Denote the space of  $x\times y$ complex matrices by $\calC^{x\times y}$.  Denote the set of real number and positive real number by $\calR$ and $\calR^{+}$, respectively. Denote the distribution of a circularly symmetric complex Gaussian (CSCG) random variable with mean $\mu$ and variance $\sigma^2$ by $\calC \calN (\mu,\sigma^2)$. Denote the transpose and conjugate transpose of a vector $\bv$ by $\bm v^T$ and $\bm v^H$, respectively. Denote the operation of taking real part by $\text{Re}\{\cdot\}$.
\vspace{-0.2cm}
\section{System Model} \label{systemmodel}
In this section, we first describe the RIS-empowered underlaying D2D communication network, and then present the signal model.
\subsection{System Description}
As shown in Fig.~\ref{fig:system_newmodel}, we consider an RIS-empowered cellular network with  underlay D2D, which consists of an RIS, $N \ (N \geq 1)$ D2D transmitters (TXs) denoted as $\text{TX 1},\ldots,\text{TX \emph{N}}$, $N$ D2D receivers (RXs) denoted as $\text{RX 1},\ldots,\text{RX \emph{N}}$, $K \ (K \geq 1)$ active CUs (i.e., cellular users) denoted as $\text{CU 1},\ldots,\text{CU \emph{K}}$, and a cellular BS. The RIS has $M \ (M \geq 1)$ reflecting elements, while each D2D TX, D2D RX, CU and the BS are equipped with a single antenna. A controller is attached to the RIS to control the reflecting coefficients and communicate with other network components through separate wireless links. We assume that the D2D links share the uplink (UL) spectrum of the cellular network, since the UL spectrum is typically underutilized compared to the downlink spectrum. To alleviate interference, we further assume that a D2D link shares at most one
CU's spectrum resource, while the resource of a CU can be shared  by at most one D2D link~\cite{feng2013device}~\cite{RamezanijointD2D}.

\begin{figure} [!t]
	\centering	\includegraphics[width=.5\columnwidth]{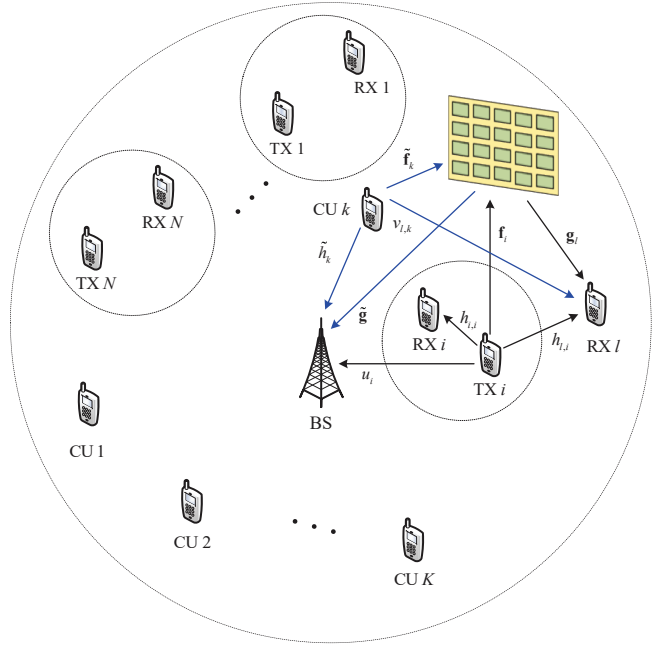}
	\caption{An  RIS-empowered underlaying D2D communication network.} \label{fig:system_newmodel}
\vspace{-0.4cm}
\end{figure}

All channels are assumed to experience quasi-static flat fading. The channels from $\text{TX \emph{i}} \ (1\leq i\leq N)$ to $\text{RX \emph{l}} \ (1\leq l\leq N)$ and RIS are denoted by $h_{l,i}\in \calC$ and ${\boldf }_i \in \calC^{M \times 1}$, respectively. For notational clarity, we represent each channel related to the cellular network  with a tilde. The channels from $\text{CU \emph{k}} \ (1\leq k\leq K)$ to BS and RIS are denoted by $ \tilde h_{k}\in \calC$ and $\tilde{\boldf }_k \in \calC^{M \times 1}$ , respectively; the channels from RIS to $\text{RX \emph{l}}$ and BS are denoted by $\mathbf{g}_l \in \calC^{M \times 1}$ and $\tilde{\mathbf{g}} \in \calC^{M \times 1}$, respectively; the interference channels from $\text{TX \emph{i}}$ to BS and from $\text{CU \emph{k}}$ to $\text{RX \emph{l}}$ are denoted by $u_i \in \calC$ and $v_{l,k} \in \calC$, respectively.

\subsection{Signal Model}
Let $\bm\Phi=\diag\{\alpha_1e^{j\theta_1},\ldots,\alpha_Me^{j\theta_M}\} \in \calC^{M\times M}$ denote the reflecting coefficient matrix of the RIS, where $\alpha_m\in\calR^+$ and $\theta_m\in\calR$ denote the reflecting amplitude and reflecting phase shift of the $m$-th reflecting element, for $1 \leq m \leq M$, respectively. Let $\beta_m=\alpha_me^{j\theta_m}\in\calF$ denote the reflecting coefficient, where $\calF$ is the feasible set of the reflecting coefficients. Three different settings for reflecting coefficients are considered in this paper.

\subsubsection{Ideal Reflecting Coefficients}
The amplitude and phase of each reflecting element are continuously adjustable. Specifically, the reflecting amplitude $\alpha_m \in [0,1]$, and the reflecting phase shift $\theta_m \in [0,2\pi)$. The set of all reflecting coefficients with ideal reflecting coefficient is
\begin{align}
&\calF_1= \left\{\beta_m=\alpha_me^{j\theta_m}\big||\beta_m|^2\leq1\right\}.
\end{align}

\subsubsection{Continuous Reflecting Phase Shift}
The reflecting phase shift $\theta_m$ takes continuous values in the range $[0,2\pi)$, while the reflecting amplitude $\alpha_m$ takes the maximum value of 1. The set of reflecting coefficients with continuous reflecting phase shift is
\begin{align}
&\calF_2= \left\{\beta_m=e^{j\theta_m}\big|\theta_m \in [0,2\pi)\right\}.
\end{align}

\subsubsection{Discrete Reflecting Phase Shift}
In this setting, $\alpha_m=1$, and the reflecting phase shift $\theta_m$ is $B$-bit quantized, taking $2^B$ discrete values. The corresponding set of reflecting coefficients is
\begin{align}
&\calF_3= \left\{\beta_m=e^{j\theta_m}\big|\theta_m \in \left\{0, \frac{2\pi}{2^B}, \ldots, \frac{2\pi(2^B-1)}{2^B}\right\}\right\}.
\end{align}

 From \cite{di2019smart}\cite{Liang2019}, different reflecting amplitudes and phase shifts can be realized by switching different resistor loads and setting different bias voltages to tuning elements that are typically varactor diodes, respectively. Due to hardware characteristic and cost limitations, it is practical to achieve finite-resolution phase shift. Therefore, the set of discrete reflecting phase shifts $\calF_3$ is in common use. Nevertheless, it is important to evaluate the system performances with $\calF_1$ and $\calF_2$, which provide upper bounds for the performance with $\calF_3$. 

The transmit signals from $\text{TX \emph{i}}$ and $\text{CU \emph{k}}$  are denoted as $s_i$ and $x_k$, respectively, which follow independent CSCG distribution with zero mean and unit variance, i.e., $s_i \sim \calC\calN (0,1)$, $x_k \sim \calC\calN (0,1)$. Denote the index set of active D2D pairs as $\calD \subseteq\{1,\ldots,N\}$. The corresponding SINR for $\text{RX \emph{n}}$ decoding $s_n$ from $\text{D2D TX \emph{n}}\in \calD$ is
\begin{align}
&\gamma_{n}^d=\frac{P_n^d\left|\mathbf{g}_{n}^H\bm \Phi {\boldf }_n+h_{n,n}\right|^2}{\sum\limits_{k=1}^{K} \rho_{k,n}P_k^c\left|\mathbf{g}_{n}^H\bm \Phi \tilde{\boldf }_k+v_{n,k}\right|^2+\sigma^2}, \label{eq:SINR_newd}
\end{align}
where $P_i^d$ and $P_k^c$ are the transmit power of $\text{TX \emph{i}}$ and $\text{CU \emph{k}}$, respectively; $\rho_{k,n}$ is the resource reuse indicator for cellular link $k$ and D2D link $n$, $\rho_{k,n}=1$ when D2D link $n$ reuses the resource of CU $k$, and $\rho_{k,n}=0$ otherwise; $\sigma^{2}$ is the power of additive white Gaussian noise (AWGN) at $\text{RX \emph{n}}$.


The SINR for the BS decoding $x_k$ from $\text{CU \emph{k}}$ is
\begin{align}
&\gamma_{k}^c=\frac{P_k^c\left|\tilde{\mathbf{g}}^H\bm \Phi \tilde{\boldf }_k+\tilde h_{k}\right|^2}{\sum\limits_{i=1}^{N} \rho_{k,i}P_i^d\left|\tilde{\mathbf{g}}^H\bm \Phi {\boldf }_i+u_{i}\right|^2+\sigma^2}, \label{eq:SINR_newc}
\end{align}
where $\sigma^{2}$ is the power of AWGN at the BS.

Hence, the overall network's SE (i.e., sum rate of both D2D users and CUs) in bps/Hz is
\begin{align}
&R\left(\bm\rho,\mathbf{p},\bm\Phi\right)=\sum \limits_{n\in \calD}\log_2(1+\gamma_n^d)+\sum\limits_{k=1}^{K}\log_2(1+\gamma_k^c),\label{eq:sumrate_new}
\end{align}
where the length-($KN$) resource reuse indicator vector $\bm\rho=\left[\rho_{1,1},\ldots,\rho_{1,N},\rho_{2,1},\ldots,\rho_{K,N}\right]^T$, and the length-($K+N$) power allocation vector  $\mathbf p=\left[P_1^d,\ldots,P_N^d,P_1^c,\ldots,P_K^c\right]^T$.

\section{PROBLEM FORMULATION FOR SE MAXIMIZATION}\label{formulation}
In this section, we formulate a problem to maximize the SE in \eqref{eq:sumrate_new}, by jointly optimizing the  resource reuse indicator vector $\bm{\rho}$, the transmit power vector $\mathbf p$ and the reflecting coefficients matrix $\bm\Phi$.  The optimization problem is formulated as follows
\begin{subequations}
\begin{align}
\text{(P1)}: \quad \underset{\bm\rho, \mathbf p, \bm \Phi}{\max}  \quad &R\left(\bm\rho, \mathbf p, \bm \Phi\right) \label{eq:P2objective}\\
\text{s.t.}\quad &\gamma_n^d\geq \gamma_{\min}^d ,  \quad n \in \calD \label{eq:P2constraint1_new} \\
&\gamma_k^c\geq \gamma_{\min}^c,\quad 1\leq k\leq K  \label{eq:P2constraint2_new} \\
& \sum_{k=1}^{K} \rho_{k,n}\leq 1 \label{eq:P2constraint3_new} \\
& \sum_{n\in\calD}^{N} \rho_{k,n}\leq 1 \label{eq:P2constraint4_new} \\
& 0\leq P_n^d \leq P_{\max}^d \quad  \label{eq:P2constraint5_new} \\
& 0\leq P_k^c \leq P_{\max}^c \quad \label{eq:P2constraint6_new} \\
& \beta_m \in \calF, \quad 1\leq m\leq M \label{eq:P2constraint7_new}
\end{align}
\end{subequations}
where  \eqref{eq:P2constraint1_new} and \eqref{eq:P2constraint2_new} indicate the required minimum SINRs (i.e., QoS) $\gamma_{\min}^d$ and $\gamma_{\min}^c$ for D2D links and cellular links, respectively; \eqref{eq:P2constraint3_new} ensures that a D2D link shares at most one CU's resource, while  \eqref{eq:P2constraint4_new} indicates that  the resource of a CU can be shared  by at most one D2D link; \eqref{eq:P2constraint5_new} and \eqref{eq:P2constraint6_new} are the maximum transmit power constraints on the TXs and CUs, respectively;  and \eqref{eq:P2constraint7_new} is the practical constraint on the reflecting coefficients with $\calF\in\{\calF_1, \calF_2, \calF_3\}$ .

Notice that (P1) is a non-convex problem. First, (P1) involves integer variables $\bm\rho$ and thus is NP-hard. Moreover, the objective function and the constraint functions of \eqref{eq:P2constraint1_new} and \eqref{eq:P2constraint2_new} are non-concave with respect to the variables $\bm {\rho}$, $\mathbf p$ and $\bm\Phi$, and these variables are all coupled. There is no standard method to solve such a non-convex problem. In the sequel, we first propose an user-pairing scheme with low complexity to determine the value of the resource reuse indicator vector $\bm\rho$. Then, we propose an efficient algorithm based on the AO (i.e., alternating optimization), SCA (i.e., successive convex approximation), Lagrangian dual transform  and  quadratic transform techniques to optimize $\mathbf p$ and $\bm\Phi$ in an iterative manner.

\section{Solution to SE-Maximization Problem} \label{solution} 
In order to solve the SE-Maximization problem (P1), we first propose an efficient user-pairing scheme to determine integer variables $\bm\rho$, then optimize $\mathbf p$ and $\bm\Phi$ in an iterative manner. To begin with, we solve (P1) with $\calF=\calF_1$, which makes  \eqref{eq:P2constraint7_new} a convex constraint. Therefore, the non-convexity of (P1) only roots from the objective function and other constraints. Afterwards, we  utilize the projection method to obtain heuristic solutions to (P1) with $\calF=\calF_2$ and $\calF=\calF_3$.

\subsection{Relative-Channel-Strength based Pairing Scheme}
Since the user-pairing design involves integer programming which is hard to solve, we propose a relative-channel-strength (RCS) based low-complexity pairing scheme to design the resource reuse indictors $\bm{\rho}$.

Notice that there are $A_K^N$ different possible pairings denoted as a set $\Pi\triangleq\{\pi_1,\ldots,\pi_{A_K^N}\}$. Each possible pairing can be viewed as an index mapping denoted as $\pi_q: k\in \calU_q {\longrightarrow} n \in\calD_q$, for $q=1,\ldots, A_K^N$, i.e., the $\pi_q$ maps each CU index $k \in \calU_q \subset \{1,2,\ldots,K\}$ to a D2D-link index $n \in \calD_q \subset \calD$. The RCS-based pairing scheme determines the pairing $\pi_{q^{\star}}$ by the following criterion
\begin{align}\label{eq:pairing}
\pi_{q^{\star}}=\underset{\pi_q \in \Pi}{\arg\max} \sum_{k \in \calU_q} \frac{|\tilde h_{k}|^2}{|v_{\pi_q(k),k}|^2}+\frac{| h_{\pi_q(k),\pi_q(k)}|^2}{|u_{\pi_q(k)}|^2}.
\end{align}

This heuristic pairing scheme chooses the pairing mapping which maximizes the sum of the relative channels that is defined as the ratio of (transmitter-to-receiver) useful channel strength over interference channel strength. Specifically, the first term in the summation of \eqref{eq:pairing} is the ratio of each paired CU-to-BS channel strength over the paired CU-to-RX interference channel strength, and the second term is the ratio of each paired TX-to-RX channel strength over the paired TX-to-BS interference channel strength.

Clearly, this heuristic pairing scheme that requires only simple comparison features low complexity, but fortunately its resultant design only suffers from slight performance degradation compared to the design with ideal pairing achieved by exhaustive search, as numerically shown in Section~\ref{simulation}. This RCS-based pairing scheme will also be used for EE maximization in Section \ref{energyefficiency}.

\subsection{Optimize Transmit Power Vector $\mathbf p$}
In each iteration $j$, for given reflecting coefficient matrix $\bm\Phi^{(j)}$, the transmit power vector $\mathbf{p}$  can be optimized by solving the following subproblem
\begin{subequations}
\label{eq:P1.1}
\begin{align}
\text{(P1.1)}: \quad \underset{\mathbf{p}}{\max} \quad &R(\mathbf{p}) \label{eq:P1.1objective}\\
\text{s.t.}\quad &\eqref{eq:P2constraint1_new}, \eqref{eq:P2constraint2_new}, \eqref{eq:P2constraint5_new}, \eqref{eq:P2constraint6_new}.
\end{align}
\end{subequations}

Since the objective function of (P1.1) is not concave with respect to the optimization variable $\mathbf p$, (P1.1) is non-convex. Notice that the objective function can be rewritten as follows
\begin{align}
R&=\sum \limits_{n\in \calD}\left[\log_2\left(P_n^dQ_{n,n}^{(j)}+A_1\right)-\log_2\left(A_1\right)\right] +\sum\limits_{k=1}^{K}\left[\log_2\left(P_k^cQ_{k}^{(j)}+A_2\right)-\log_2\left(A_2\right)\right], \label{eq:P1.1rewritten}
\end{align}
where $Q_{n,n}=|\mathbf{g}_{n}^H\bm \Phi {\boldf }_n+h_{n,n}|^2$, $\widetilde Q_{n,k}=|\mathbf{g}_{n}^H\bm \Phi \tilde{\boldf }_k+v_{n,k}|^2$, $\widetilde Q_{k}=|\tilde{\mathbf{g}}^H\bm \Phi \tilde{\boldf }_k+\tilde h_{k}|^2$, $Q_{i}=|\tilde{\mathbf{g}}^H\bm \Phi {\boldf }_i+u_{i}|^2$, $A_1=\sum_{k=1}^{K}\rho_{k,n}P_k^{c}\widetilde Q_{n,k}^{(j)}+\sigma^2$ and $A_2=\sum_{i=1}^{N} \rho_{k,i}P_i^d Q_i^{(j)}+\sigma^2$.

The non-convexity of \eqref{eq:P1.1rewritten} comes from the items $-\log_2(A_1)$ and $-\log_2(A_2)$. We exploit the SCA technique \cite{beck2010sequential} to solve (P1.1). Specifically, we need to find a concave lower bound to approximate the objective function. From the fact that any convex function can be lower bounded by its first-order Taylor expansion at any point, we obtain the following concave lower bound $R^{\text{lb}}$ at the local point $\mathbf{p}^{(j)}$
\begin{align}
R &\geq\sum \limits_{n\in\calD}\bigg[\log_2\left(P_n^d Q_{n,n}^{(j)}+A_1\right)\!\! -\log_2\left(A_1^{(j)}\right)  -\frac{1}{A_1^{(j)}} \sum\limits_{k=1}^{K}\rho_{k,n} \widetilde Q_{n,k}^{(j)}\left(P_k^c-P_k^{c(j)}\right)\bigg] \nonumber\\
&+ \sum\limits_{k=1}^{K}\bigg[\log_2\left(P_k^c\widetilde Q_{k}^{(j)}+A_2\right)-\log_2\left(A_2^{(j)}\right)
-\frac{1}{A_2^{(j)}}\sum\limits_{i=1}^{N} \rho_{k,i}Q_{i}^{(j)}\left(P_i^d-P_i^{d(j)}\right)\bigg]
\triangleq R^{\text{lb}}. \label{eq:taylor}
\end{align}

With given local point $\mathbf p^{(j)}$ and lower bound $R^{\text{lb}}$, the subproblem (P1.1) is approximated as
\begin{subequations}
\label{eq:P1.2}
\begin{align}
\text{(P1.1.A)}:\underset{\mathbf p}{\max} \quad &R^{\text{lb}}  \\
\text{s.t.}\quad &\eqref{eq:P2constraint1_new}, \eqref{eq:P2constraint2_new}, \eqref{eq:P2constraint5_new}, \eqref{eq:P2constraint6_new}.
\end{align}
\end{subequations}

Problem (P1.1.A) is a convex problem which can be efficiently solved with standard toolbox, e.g., CVX\cite{grant2008cvx}. Notice that the adopted lower bound $R^{\text{lb}}$  implies that the feasible set of (P1.1.A) is always a subset of that of  (P1.1). As a result, the optimal objective value obtained from  (P1.1.A) is in general a lower bound to that of  (P1.1).

\subsection{Optimize Reflecting Coefficient Matrix with $\calF=\calF_1$}\label{optimizePhi}
In each iteration $j$, for given transmit power vector $\mathbf p^{(j)}$, the reflecting coefficient matrix $\bm\Phi$ can be optimized by solving the following subproblem
\begin{subequations}
\label{eq:P2.1}
\begin{align}
\text{(P1.2)}:\underset{\bm \Phi}{\max} \quad &R(\bm \Phi) \label{eq:P2objective2.1}\\
\text{s.t.}\quad &\eqref{eq:P2constraint1_new}, \eqref{eq:P2constraint2_new}, \eqref{eq:P2constraint7_new}.
\end{align}
\end{subequations}

The logarithm in the objective function makes it difficult to solve (P1.2). Therefore, we tackle it via the Lagrangian dual transform proposed in \cite{shen2018fractional}. Introducing auxiliary variables $\boldsymbol\eta^d=[\eta_1^d,\ldots,\eta_N^d]^T$ and $\boldsymbol\eta^c=[\eta_1^c,\ldots,\eta_K^c]^T$, the subproblem (P1.2) can be equivalently reformulated as
\begin{subequations}
\begin{align}
\text{(P1.2.L)}:\quad\underset{\bm\Phi}{\max} \quad &R_a(\bm\Phi)   \label{eq:P2objective2.2}\\
\text{s.t.}\quad &\eqref{eq:P2constraint1_new}, \eqref{eq:P2constraint2_new}, \eqref{eq:P2constraint7_new},
\end{align}
\end{subequations}
where the new objective function $R_a(\bm\Phi)$ is expressed as
\begin{align}
R_a(\bm\Phi)&=\underset{\boldsymbol{\eta_n^d}}{\max} \left(\sum_{n\in\calD} \log\left(1\!+\!\eta_n^d\right)-\sum_{n\in\calD} \eta_n^d\!+\! \sum_{n\in\calD} \frac{(1+\eta_n^d)\gamma_n^d}{1+\gamma_n^d} \right)\nonumber\\
& \quad + \underset{\boldsymbol{\eta_k^c}}{\max} \left(\sum_{k=1}^{K} \log(1+\eta_k^c)-\sum_{k=1}^{K} \eta_k^c+ \sum_{k=1}^{K} \frac{(1+\eta_k^c)\gamma_k^c}{1+\gamma_k^c} \right). \label{eq:P1Obj}
\end{align}

Actually, we first optimize ${\eta_n^d}$ and ${\eta_k^c}$ with fixed ${\gamma_n^d}$ and ${\gamma_k^c}$, respectively; then optimize ${\gamma_n^d}$ and ${\gamma_k^c}$ with fixed ${\eta_n^d}$ and ${\eta_k^c}$, respectively. It can be easily checked that $R_a$ is a concave differentiable function over ${\eta_n^d}$ with ${\gamma_n^d}$ being fixed, so the optimal value of ${\eta_n^d}$ can be obtained by setting $\partial R_a(\gamma_n^{d(j)})/\partial {\eta_n^d}=0$, i.e., $ \eta_{\text{opt},n}^{d(j)}= \gamma_n^{d(j)}$. Similarly, $ \eta_{\text{opt},k}^{c(j)}= \gamma_k^{c(j)}$. Replacing ${\eta_n^d}$ and $\eta_k^c$ with $\eta_{\text{opt},n}^{d(j)}$ and $\eta_{\text{opt},k}^{c(j)}$, respectively, we find that the optimal objective values of (P1.2) and (P1.2.L) are equal, i.e., $R_a=R$.

Then, with given optimal $\boldsymbol{\eta_{\text{opt}}}^{d(j)}$ and $\boldsymbol{\eta_{\text{opt}}}^{c(j)}$, (P1.2.L) is transformed into
\begin{subequations}
\begin{align}
\text{(P1.2.E)}: \quad &\underset{\bm\Phi}{\max}\quad R_b(\bm\Phi) \\
\text{s.t.}\quad &\eqref{eq:P2constraint1_new}, \eqref{eq:P2constraint2_new}, \eqref{eq:P2constraint7_new},
\end{align}
\end{subequations}
where $R_b(\bm\Phi)$ is expressed as
\begin{align}
&R_b(\bm\Phi) =\sum_{n \in \calD} \frac{(1+\eta_{\text{opt},n}^{d(j)})\gamma_n^d}{1+\gamma_n^d}+\sum_{k=1}^{K} \frac{(1+\eta_{\text{opt},k}^{c(j)})\gamma_k^c}{1+\gamma_k^c}.
\end{align}

Further, we define $\bm\theta^H\bm{\omega}_{n,n}=\mathbf g_{n}^H\bm \Phi {\boldf }_n$, $\bm\theta^H\tilde{\bm{\omega}}_{n,k}=\mathbf g_{n}^H\bm \Phi \tilde{\boldf }_k$, $\bm\theta^H\tilde{\bm{\omega}}_{k}=\tilde{\mathbf{g}}^H\bm \Phi \tilde{\boldf }_k$ and $\bm\theta^H\bm{\omega}_{i}=\tilde{\mathbf{g}}^H\bm \Phi {\boldf }_i$. From \eqref{eq:SINR_newd} and \eqref{eq:SINR_newc}, optimizing the reflecting coefficient matrix $\bm\Phi$ can be equivalently transformed into optimizing $\bm\theta$ in the following objective function
\begin{align}
&R_b(\bm\theta)=\sum_{n\in\calD} \frac{\left(1+\eta_{\text{opt},n}^{d(j)}\right)P_n^{d(j)}Q_{n,n}^w}{P_n^{d(j)} Q_{n,n}^w+\sum\limits_{k=1}^{K}\rho_{k,n}P_k^{c(j)}\widetilde Q_{n,k}^w+\sigma^2}+\sum_{k=1}^{K} \frac{ \left(1+\eta_{\text{opt},k}^{c(j)}\right)P_k^{c(j)}\widetilde Q_{k}^w}{P_k^{c(j)}\widetilde Q_{k}^w+\sum\limits_{i=1}^{N} \rho_{k,i}P_i^{d(j)}Q_{i}^w+\sigma^2},\label{eq:Lagrangian}
\end{align}
where $Q_{n,n}^w=|\bm\theta^H\bm{\omega}_{n,n}+ h_{n,n}|^2$, $\widetilde Q_{n,k}^w=|\bm\theta^H\tilde{\bm{\omega}}_{n,k}+v_{n,k}|^2$, $\widetilde Q_{k}^w=|\bm\theta^H\widetilde{\bm{\omega}}_{k}+\tilde h_{k}|^2$ and $Q_{i}^w=|\bm\theta^H\bm{\omega}_{i}+ u_{i}|^2$.

Actually, (P1.2.E) can be equivalently reformulated as follows
\begin{subequations}
\begin{align}
\text{(P1.2.T)}:\underset{\bm\theta}{\max} \quad &R_b(\bm\theta)  \\
\text{s.t.}\quad &\eqref{eq:P2constraint1_new}, \eqref{eq:P2constraint2_new}, \eqref{eq:P2constraint7_new}.
\end{align}
\end{subequations}

Problem (P1.2.T) is a multiple-ratio fractional programming problem, which can be solved by utilizing the quadratic transform technique proposed in \cite{shen2018fractional}. Introducing auxiliary variable $\mathbf y=[y_1^d,\ldots,y_N^d,y_1^c,\ldots,y_K^c]^T$, the objective function of (P1.2.T) can be transformed as follows
\begin{align}
&R_b(\bm\theta, \mathbf y)=\nonumber\\
&\sum_{n\in\calD}\!\!\Bigg[\!2\sqrt{\!\! \left(1 \!+\!\eta_{\text{opt},n}^{d(j)}\right)P_n^{d(j)}} \text{Re} \left\{y_n^{d*}\bm\theta^H\bm{\omega}_{n,n}
\!+\! y_n^{d*} h_{n,n}\right\}\!-\!|y_n^d|^2\left(P_n^{d(j)} Q_{n,n}^w\!+\!\sum\limits_{k=1}^{K}\!\rho_{k,n}P_k^{c(j)}\widetilde Q_{n,k}^w+\sigma^2\right)\!\!\Bigg]\nonumber\\
&\!+\!\sum_{k=1}^{K}\! \Bigg[2\!\sqrt{(1 \!+\!\eta_{\text{opt},k}^{c(j)})P_k^{c(j)}} \text{Re} \left\{y_k^{c*}\bm\theta^H\widetilde{\bm{\omega}}_{k} \!+\! y_k^{c*}\tilde h_{k} \right\}\!-\!|y_k^c|^2\left(P_k^{c(j)}\widetilde Q_{k}^w\!+\!\sum\limits_{i=1}^{N} \!\rho_{k,i}P_i^{d(j)}Q_{i}^w\!+\!\sigma^2\right)\Bigg]. \label{eq:P2.4objectivetran}
\end{align}

Similarly, we first optimize $\mathbf y$ with fixed $\bm \theta$, then optimize $\bm \theta$ with fixed $\mathbf y$. It can be easily checked that $R_b(\bm \theta, \mathbf y)$ is a concave differentiable function over $\mathbf y$ with fixed $\bm \theta$, so the optimal solution of $\mathbf y$ can be obtained by setting $\partial R_b(\mathbf y, \bm \theta^{(j)})/ \partial {\mathbf y}=0$. Thus, the optimal value of $\mathbf y$ is given by
\begin{align}
&y_{\text{opt},n}^{{d(j)}}=\frac{\sqrt{ \left(1 + \eta_{\text{opt},n}^{d(j)}\right)P_n^{d(j)}} \left[\bm\theta^{H(j)}\bm{\omega}_{n,n} +  h_{n,n}\right]}{P_n^{d(j)} Q_{n,n}^w + \sum\limits_{k=1}^{K}\rho_{k,n}P_k^{c(j)}\widetilde Q_{n,k}^w + \sigma^2}, \label{eq:ynopt}\\
&y_{\text{opt},n}^{{c(j)}} = \frac{\sqrt{\left(1+ \eta_{\text{opt},k}^{c(j)}\right)P_k^{c(j)}} \left[\bm\theta^{H(j)}\widetilde{\bm{\omega}}_{k}+ \tilde h_{k}\right]}{P_k^{c(j)}\widetilde Q_{k}^w+\sum\limits_{i=1}^{N} \rho_{k,i}P_i^{d(j)}Q_{i}^w + \sigma^2}. \label{eq:ycopt}
\end{align}

Then, we optimize $\bm \theta$ for given $\mathbf y$. Replacing $y_n^{d{(j)}}$ and $y_k^{c{(j)}}$ with $y_{\text{opt},n}^{d{(j)}}$ and $y_{\text{opt},k}^{c{(j)}}$, respectively. Denote $\mathbf B_{1n}=P_n^{d(j)}\bm{\omega}_{n,n}\bm{\omega}_{n,n}^H$, $\mathbf  B_{2n}=\sum_{k=1}^{K}\rho_{k,n}P_k^{c(j)}\tilde{\bm{\omega}}_{n,k}\tilde{\bm{\omega}}_{n,k}^H$, $\mathbf B_{1b}=P_k^{c(j)}\tilde{\bm{\omega}}_{k}\tilde{\bm{\omega}}_{k}^H$, $\mathbf B_{2b}=\sum_{i=1}^{N}\!\!\rho_{k,i}P_i^{d(j)}\bm{\omega}_{i}\bm{\omega}_{i}^H$, $\mathbf e_{1n}=P_n^{d(j)} h_{n,n}^*\bm{\omega}_{n,n}$, $\mathbf e_{2n}=\sum_{k=1}^{K}\rho_{k,n}P_k^{c(j)}v_{n,k}^*\tilde{\bm{\omega}}_{n,k}$, $\mathbf e_{1b}=P_k^{c(j)}\tilde h_{k}^*\tilde{\bm{\omega}}_{k}$ and $\mathbf e_{2b}=\sum_{i=1}^{N}\!\rho_{k,i}P_i^{d(j)}u_{i}^*\bm{\omega}_{i}$.  Notice that the term $|\bm \theta^H \bm \omega+h|^2$ can be expanded as follows
\begin{align}
&|\bm \theta^H \bm \omega+h|^2= \bm \theta^H \bm \omega \bm \omega^H\bm \theta +2\text{Re}(\bm \theta^H \bm \omega h^*) +|h|^2. \label{eq:norm}
\end{align}

For given $\mathbf y$, the objective function is transformed as follows
\begin{align}
&R_b(\bm \theta, \mathbf y)=-\bm \theta^H\mathbf B_1\bm \theta+2\text{Re}\left(\bm \theta^H\mathbf e_1\right)+C_1,
\end{align}
where $C_1$ is a constant, the  matrix $\mathbf B_1$ and vector $\mathbf e_1$ are 
\begin{align}
&\mathbf B_1 = \sum_{n\in\calD} \left|y_{\text{opt},n}^{d(j)}\right|^2 \mathbf B_{n} +\sum_{k=1}^{K}\left|y_{\text{opt},k}^{c(j)}\right|^2 \mathbf B_{b} \\
&\mathbf e_1=\!\!\sum_{n\in\calD}\!\!\bigg[ \! \sqrt{ \! \left(1\!\! +\!\! \eta_{\text{opt},n}^{d(j)}\right)P_n^{d(j)}} \left(y_{\text{opt},n}^{d(j)}\right)^*\!\!\bm{\omega}_{n,n} \!\!-\!\! \left|y_{\text{opt},n}^{d(j)}\right|^2\!\!\mathbf e_{n} \!\! \bigg] \!\!+\!\!\sum_{k=1}^K\!\bigg[\!\! \sqrt{ \! \left(1\!\! +\!\! \eta_{\text{opt},k}^{c(j)} \!\right)P_k^{c(j)}} \left(\!y_{\text{opt},k}^{c(j)}\!\right)^*\!\widetilde{\bm{\omega}}_{k}\!\!-\!\!\left|y_{\text{opt},k}^{c(j)}\right|^2 \!\mathbf e_{b} \!\bigg],
\end{align}
with $\mathbf B_{n}=\mathbf B_{1n} +\mathbf B_{2n}$, $\mathbf B_{b}=\mathbf B_{1b} +\mathbf B_{2b}$, $\mathbf e_{n}=\mathbf e_{1n}+\mathbf e_{2n}$, and $\mathbf e_{b}=\mathbf e_{1b}+\mathbf e_{2b}$.

Similarly, introducing an auxiliary variable $x_d$, the constraint function of \eqref{eq:P2constraint1_new} can be equivalently written as
\begin{align}
f_d(\bm\theta,x_d)&= 2\sqrt{P_{n}^{d(j)}}\text{Re}\left(x_d^*\bm\theta^H\bm\omega_{n,n}+x_d^*h_{n,n}\right) -|x_d|^2\left(\sum\limits_{k=1}^{K}\rho_{k,n} P_k^{c(j)}\widetilde Q_{n,k}^w+\sigma^2\right). \label{eq:transform_quad}
\end{align}

Introducing an auxiliary variable $x_c$, the constraint function of  \eqref{eq:P2constraint2_new} can be equivalently written as follows
\begin{align}
f_c(\bm\theta,x_c)&= 2\sqrt{P_{k}^{c(j)}}\text{Re}\left(x_c^*\bm\theta^H\tilde{\bm\omega}_{k}+x_c^*\tilde h_{k}\right) -|x_c|^2\left(\sum\limits_{i=1}^{N}\rho_{k,i}P_i^{d(j)}Q_{i}^w+\sigma^2\right). \label{eq:transform_quac}
\end{align}

With $\bm \theta$ being fixed, $f_d(x_d,\bm\theta^{(j)})$ and $f_c(x_c,\bm\theta^{(j)})$ are concave differentiable functions over $x_d$ and  $x_c$, respectively.  The optimal solution of $x_d$ and  $x_c$ can be obtained by setting $\partial f_d(x_d,\bm \theta^{(j)})/ \partial {x_d}=0$ and $\partial f_c(x_c,\bm \theta^{(j)})/ \partial {x_c}=0$, respectively. The optimal values of $x_d$ and  $x_c$ are obtained, respectively, as follows
\begin{align}
&x_{\text{opt},d}^{(j)}= \frac{\sqrt{P_{n}^{d(j)}}(\bm \theta^{H(j)} \bm{\omega}_{n,n}+ h_{n,n})}
{\sum_{k=1}^{K} \rho_{k,n}P_k^{c(j)}\widetilde Q_{n,k}^w+\sigma^2}. \label{eq:xd_opt}\\
&x_{\text{opt},c}^{(j)}= \frac{\sqrt{P_{k}^{c(j)}}(\bm \theta^{H(j)} \tilde{\bm{\omega}}_{k}+\tilde h_{k})}
{\sum_{i=1}^{N} \rho_{k,i}P_i^{d(j)}Q_{b,i}^w+\sigma^2}. \label{eq:xc_opt}
\end{align}

Similarly, replacing $x_d^{(j)}$ and $x_c^{(j)}$ with $x_{\text{opt},d}^{(j)}$ and $x_{\text{opt},c}^{(j)}$, respectively, the following relationships are established
\begin{align}
&f_d(\bm\theta)=-\bm \theta^H\mathbf B_2\bm \theta+2\text{Re}\left(\bm \theta^H\mathbf e_2\right)+C_2  \geq \gamma_{\min}^d, \label{eq:con_trand}
\end{align}
\begin{align}
&f_c(\bm\theta)=-\bm \theta^H\mathbf B_3\bm \theta+2\text{Re}\left(\bm \theta^H\mathbf e_3\right)+C_3  \geq \gamma_{\min}^c, \label{eq:con_tranc}
\end{align}
where the positive-definite matrixes $\mathbf B_2$ and $\mathbf B_3$ are given by
\begin{align}
\mathbf B_2&=|x_{\text{opt},d}^{(j)}|^2\mathbf B_{2n}, \\
\mathbf B_3&=|x_{\text{opt},c}^{(j)}|^2\mathbf B_{2b},
\end{align}
the vectors $\mathbf e_2$ and $\mathbf e_3$ are given by
\begin{align}
\mathbf e_2&=\sqrt{P_{n}^{d(j)}}(x_{\text{opt},d}^{(j)})^*\bm{\omega}_{n,n}-|x_{\text{opt},d}^{(j)}|^2\mathbf e_{2n},\\
\mathbf e_3&=\sqrt{P_{k}^{c(j)}} (x_{\text{opt},c}^{(j)})^*\tilde{\bm{\omega}}_{k}-|x_{\text{opt},c}^{(j)}|^2\mathbf e_{2b},
\end{align}
and the constants $C_2$ and $C_3$ are given by
\begin{align}
C_2&=2\sqrt{P_{n}^{d(j)}} \text{Re}\left\{(x_{\text{opt},d}^{(j)})^* h_{n,n}\right\}-|x_{\text{opt},d}^{(j)}|^2\left(\sum_{k=1}^{K} \rho_{k,n}P_k^{c(j)}|v_{n,k}|^2+\sigma^2\right),\\
C_3&=2\sqrt{P_{k}^{c(j)}} \text{Re}\left\{(x_{\text{opt},c}^{(j)})^*\tilde h_{k}\right\}-|x_{\text{opt},c}^{(j)}|^2\left(\sum_{i=1}^{N}  \rho_{k,i}P_i^{d(j)}|u_{i}|^2+\sigma^2\right).
\end{align}

Therefore, (P1.2.T) is transformed as the following problem
\begin{subequations}
\begin{align}
\text{(P1.2.Q)}:\underset{\bm\theta}{\max}&  -\bm \theta^H\mathbf B_1\bm \theta+2\text{Re}(\bm \theta^H\mathbf e_1)+C_1  \\
\text{s.t.} &\quad \eqref{eq:con_trand}, \eqref{eq:con_tranc}, \eqref{eq:P2constraint7_new}.
\end{align}
\end{subequations}

The resulting (P1.2.Q) is a quadratic constrained quadratic programming (QCQP) problem, which can be effectively solved by standard toolbox, e.g., CVX\cite{grant2008cvx}.

\subsection{Optimize Reflecting Coefficient matrix $\bm\Phi$ with $\calF=\calF_2$}
In this subsection, we solve (P1) with $\calF=\calF_2$, which makes \eqref{eq:P2constraint7_new} a non-convex constraint. We utilize the projection method  to solve this non-convex problem. For the convenience of illustration, the optimal solutions to (P1) with $\calF=\calF_1$ and with $\calF=\calF_2$ are denoted by $(\mathbf p_{\text{opt},1}, \bm\Phi_{\text{opt},1})$ and $(\mathbf p_{\text{opt},2}, \bm\Phi_{\text{opt},2})$ respectively.

Notice that $\mathbf p_{\text{opt},2}$ take the values of $\mathbf p_{\text{opt},1}$, i.e., $\mathbf p_{\text{opt},2}=\mathbf p_{\text{opt},1}$. To obtain a suboptimal $\bm\Phi_{\text{opt},2}$, we project the solution to (P1) with $\calF=\calF_1$ into $\calF=\calF_2$. Specifically, we take the maximum value of the reflecting amplitude (i.e., $\alpha_m=1$), and the reflecting phases of $\bm\Phi_{\text{opt},2}$  take the same value of $\bm\Phi_{\text{opt},1}$. Therefore, the solution to (P1) with  $\calF=\calF_2$ can be written as
\begin{align}
& \bm\Phi_{\text{opt},2}= e^{j\arg(\bm\Phi_{\text{opt},1})}.
\end{align}

\subsection{Optimize Reflecting Coefficient matrix $\bm\Phi$ with $\calF=\calF_3$}
In this subsection, we solve (P1) with $\calF=\calF_3$, which makes (P1) a non-convex combinational optimization problem. Such a problem is indeed NP-hard, and its complexity of exhaustive-search method increases exponentially as the number of reflecting elements increases. Similarly, we exploit the projection method to solve this non-convexity. Denote the optimal solutions to (P1) with $\calF=\calF_3$  by $(\mathbf p_{\text{opt},3}, \bm\Phi_{\text{opt},3})$ .

We project the solution to (P1) with $\calF=\calF_1$ into $\calF=\calF_3$. Specifically, we take the maximum value of the reflecting amplitude (i.e., $\alpha_m=1$), and the reflecting phases of $\bm\Phi_{\text{opt},3}$  take the nearest value  of $\bm\Phi_{\text{opt},1}$. Therefore, the solution to (P1) with  $\calF=\calF_3$ can be written as
\begin{align}
& \theta_m = \underset{\theta_m \in \{0, \frac{2\pi}{2^B}, \ldots, \frac{2\pi(2^B-1)}{2^B}\} } {\arg\min} |\theta_m- \arg(\Phi_m^{\text{opt},1})|,
\end{align}
where $\Phi_m^{\text{opt},1}$ denotes the $m$-th element on the diagonal line of $\bm \Phi_{\text{opt},1}$. Similarly, $\mathbf p_{\text{opt},3}$ takes the value of $\mathbf p_{\text{opt},1}$, i.e., $\mathbf p_{\text{opt},3}=\mathbf p_{\text{opt},1}$.

\subsection{Overall Algorithm}
\begin{algorithm}[t!]
\caption{Proposed algorithm for solving (P1)}\label{AlgorithmP1}
\begin{algorithmic}
\STATE \textbf{Step 1}: Initialize $\bp^{(0)}, \ \bPhi^{(0)}$, a small threshold constant $\epsilon=10^{-2}$. Let $j=0$. \\
\STATE \textbf{Step 2}: Exploit RCS-based pairing scheme to determine the resource reuse indictor vector $\bm \rho^{\star}$.
\REPEAT
\STATE \textbf{Step 3}: Solve problem (P1.1.A) for given $\bm\Phi^{(j)}$, and obtain the optimal solution as $\bp^{(j+1)}$.
\STATE \textbf{Step 4}: Solve problem (P1.2.Q) for given $\bp^{(j+1)}$, and obtain the optimal solution as $\bm \Phi^{(j+1)}$.
\STATE  \textbf{Step 5}: Update iteration index $j=j+1$.
\UNTIL{The increase of objective value is smaller than $\epsilon$}.
\STATE  \textbf{Step 6}: Return the suboptimal solution  $\bm\rho^{\star}$, $\bp^{\star}=\bp^{(j-1)}$ and $\bPhi^{\star}=\bPhi^{(j-1)}$. 
\end{algorithmic}
\end{algorithm}

The overall algorithm is summarized in Algorithm \ref{AlgorithmP1}. There are three blocks of variables to be optimized, i.e., $\bm\rho$, $\mathbf p$ and $\bPhi$. We first use a low-complexity user-pairing scheme  based on the RCS to determine the resource reuse indicator vector $\bm\rho$.  Under the obtained user-pairing design, we use the AO algorithm to optimize $\mathbf p$ and $\bPhi$ alternatively in the out-layer iteration. Specifically, for given $\bm\Phi$, we utilize the SCA technique to tackle the non-convexity of the objective function; for given $\mathbf p$, we exploit the Lagrangian dual transform technique to deal with the sum of logarithm function, then utilize the quadratic transform technique to solve the resulting multiple-ratio fractional programming problem.

\subsection{Convergence and Complexity Analyses}
\subsubsection{Convergence Analysis}
The convergence of Algorithm \ref{AlgorithmP1} is given in the following theorem.

\begin{mythe}
Algorithm \ref{AlgorithmP1} is guaranteed to converge.
\end{mythe}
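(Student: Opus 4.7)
The plan is to prove convergence by exhibiting a sequence of objective values $\{R(\bp^{(j)}, \bPhi^{(j)})\}$ that is monotonically non-decreasing in $j$ and uniformly upper bounded, so that the monotone convergence theorem applies. Because the RCS pairing is executed once before the main loop, $\bm\rho^{\star}$ is held fixed throughout, and I only need to track monotonicity of the two alternating inner updates.

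First, I would handle Step 3 (the SCA power update). Two properties of the surrogate $R^{\mathrm{lb}}$ constructed in \eqref{eq:taylor} must be checked: (i) tightness at the expansion point, $R^{\mathrm{lb}}(\bp^{(j)}\mid\bp^{(j)}) = R(\bp^{(j)},\bPhi^{(j)})$, which follows because the first-order Taylor expansion of the concave terms $-\log_2(A_1)$ and $-\log_2(A_2)$ coincides with those terms at $\bp^{(j)}$; and (ii) global minorization, $R^{\mathrm{lb}}(\bp\mid\bp^{(j)}) \le R(\bp,\bPhi^{(j)})$, which follows from the concavity of $-\log_2(\cdot)$. Since $\bp^{(j+1)}$ is optimal for (P1.1.A), I obtain the chain
\begin{equation*}
R(\bp^{(j+1)},\bPhi^{(j)}) \;\ge\; R^{\mathrm{lb}}(\bp^{(j+1)}\mid \bp^{(j)}) \;\ge\; R^{\mathrm{lb}}(\bp^{(j)}\mid \bp^{(j)}) \;=\; R(\bp^{(j)},\bPhi^{(j)}).
\end{equation*}
Feasibility is preserved because constraints \eqref{eq:P2constraint1_new}, \eqref{eq:P2constraint2_new}, \eqref{eq:P2constraint5_new}, \eqref{eq:P2constraint6_new} are carried over unchanged into (P1.1.A).

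Next I would handle Step 4 (the reflection-coefficient update). Here the monotonicity argument is more delicate because two nested variable transforms are applied. For the outer Lagrangian dual transform, I would verify that optimizing the auxiliary vectors $\bm\eta^d,\bm\eta^c$ at the closed-form values $\eta_{\mathrm{opt},n}^{d}=\gamma_n^d$ and $\eta_{\mathrm{opt},k}^{c}=\gamma_k^c$ makes $R_a$ identically equal to the true rate $R$, so that the equivalence $R_a(\bPhi)=R(\bPhi)$ holds at the current iterate. For the inner quadratic transform, the closed-form updates \eqref{eq:ynopt}--\eqref{eq:ycopt} (and analogously \eqref{eq:xd_opt}--\eqref{eq:xc_opt} for the SINR constraints) similarly make the surrogate $R_b(\bm\theta,\mathbf y)$ match $R_b(\bm\theta)$, and $f_d,f_c$ equal the original SINRs at $\bm\theta^{(j)}$. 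Solving the resulting QCQP (P1.2.Q) then increases the surrogate, and, invoking the equivalence property of the Lagrangian dual plus quadratic transform established in \cite{shen2018fractional}, this increase transfers to the original rate, yielding
\begin{equation*}
R(\bp^{(j+1)},\bPhi^{(j+1)}) \;\ge\; R(\bp^{(j+1)},\bPhi^{(j)}).
\end{equation*}
Feasibility with respect to \eqref{eq:P2constraint1_new}, \eqref{eq:P2constraint2_new}, \eqref{eq:P2constraint7_new} is guaranteed by the corresponding constraints in (P1.2.Q).

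Combining the two inequalities gives $R(\bp^{(j+1)},\bPhi^{(j+1)}) \ge R(\bp^{(j)},\bPhi^{(j)})$. To conclude, I would establish a uniform upper bound by noting that each SINR term $\gamma_n^d,\gamma_k^c$ is bounded by $P_{\max}^d\|\mathbf g_n\|^2\|\boldf_n\|^2/\sigma^2$-type quantities (finite since $P_{\max}^d, P_{\max}^c < \infty$ and channels are deterministic in a given realization), so the sum in \eqref{eq:sumrate_new} is bounded above by a finite constant. Monotone non-decrease together with boundedness yields convergence of $\{R(\bp^{(j)},\bPhi^{(j)})\}$. The main obstacle I expect is the second step: the transfer of monotonicity from the doubly-transformed surrogate back to the original $R$ requires careful bookkeeping of the auxiliary variables $\bm\eta,\mathbf y,x_d,x_c$ and invoking the equivalence results of the Lagrangian dual and quadratic transforms simultaneously, rather than a routine application of either in isolation.
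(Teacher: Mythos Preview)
Your proposal is correct and follows essentially the same approach as the paper: establish that the sequence $\{R(\bp^{(j)},\bPhi^{(j)})\}$ is monotonically non-decreasing across the two alternating steps and bounded above, then invoke monotone convergence. Your handling of Step~3 is in fact slightly more careful than the paper's (you use the minorization inequality $R^{\mathrm{lb}}(\bp^{(j+1)}\mid\bp^{(j)})\le R(\bp^{(j+1)},\bPhi^{(j)})$, whereas the paper asserts equality there), and your Step~4 discussion spells out the auxiliary-variable bookkeeping that the paper compresses into a single line; for the upper bound the paper appeals to continuity over a compact feasible set rather than an explicit SINR bound, but either argument suffices.
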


\begin{proof}
First, in Step 3, since the suboptimal solution $\mathbf p^{(j+1)}$ is obtained for given $\bPhi^{(j)}$, we have the following inequality on the sum rate 
\begin{align}
  R(\mathbf p^{(j)},\bPhi^{(j)})&\stackrel{(a)}{=}R^{\text{lb}}(\mathbf p^{(j)},\bPhi^{(j)}) \nonumber\\
  &\stackrel{(b)}{\le} R^{\text{lb}}(\mathbf p^{(j+1)},\bPhi^{(j)}) \nonumber\\
  &\stackrel{(c)}{=} R(\mathbf p^{(j+1)},\bPhi^{(j)}), \label{eq:convergence1}
\end{align}
where (a) and (c) hold since the Taylor expansion in \eqref{eq:taylor} is tight at given local point $\bp^{(j)}$ and $\bp^{(j+1)}$, respectively, and (b) comes from the fact that $\bp^{(j+1)}$ is the optimal solution to problem (P1.1.A).

Second, in Step 4, since $\bPhi^{(j+1)}$ is the optimal solution to problem (P1.2.Q), we can obtain the following inequality
\begin{align}
& R(\bp^{(j+1)},\bPhi^{(j)})\leq R(\bp^{(j+1)},\bPhi^{(j+1)}). \label{eq:convergence2}
\end{align}

From \eqref{eq:convergence1} and \eqref{eq:convergence2}, it is straightforward that
\begin{align}
  R(\bp^{(j)},\bPhi^{(j)})\leq R(\bp^{(j+1)},\bPhi^{(j+1)}),
\end{align}
which implies that the objective value of problem (P1) is non-decreasing after each iteration in Algorithm \ref{AlgorithmP1}. In addition, the objective value of problem (P1) is upper-bounded by some finite positive number since the objective function is continuous over the compact feasible set. Hence, the proposed Algorithm \ref{AlgorithmP1} is guaranteed to converge. This completes the convergence proof.
\end{proof}

\subsubsection{Complexity Analysis}
In Algorithm 1, the subproblems (P1.1.A) and (P1.2.Q) are alteratively solved in each outer-layer AO iteration, and the overall complexity of Algorithm 1 is mainly introduced by the update of the variables $\mathbf p$, $\bm\Phi$, $\mathbf y$, $x_d$ and $x_c$. Notice that the optimal $\mathbf y$, $x_d$ and $x_c$ are all obtained in closed forms, thus the computational complexity is negligible. Specifically, (P1.1.A)  can be solved in $\calO\left((N+K)^3\right)$ operations\cite{Bharadia2011}, while (P1.2.Q) is a convex QCQP which can be solved using interior point methods with complexity $\calO(M^{3.5})$\cite{hassanien2008robust}. Hence, the complexity of Algorithm \ref{AlgorithmP1} is $\calO\left(I_{\text{ite}}[(N+K)^3+M^{3.5}]\right)$, where $I_{\text{ite}}$ denotes the number of outer-layer AO iterations.
\section{ENERGY EFFICIENCY MAXIMIZATION}\label{energyefficiency}
In this section, we maximize the EE of the overall network, by jointly optimizing the  resource reuse indicator vector $\bm{\rho}$, the transmit power vector $\mathbf p$ and the reflecting coefficients matrix $\bm\Phi$.

\subsection{Problem Formulation for EE Maximization}
Before formulating the EE-maximization problem, we model the power consumption of the RIS. Due to the passive reflecting characteristic, the RIS's power consumption mainly comes from the control circuits\cite{8741198}. For typical control circuits, a field programmable gate array (FPGA) outputs digital control voltages with given sampling frequency, which are converted into analog control voltages by multiple digital-to-analog converters (DACs). The analog control voltage from each DAC adjusts the capacitance of each varactor diode in a continuous way, and thus controls the phase shift and amplitude of each element's reflected signals\cite{8741198}. Hence, the power consumption of RIS is modeled as follows
\begin{align}
P_{\sf RIS}(B)=P_{\sf FPGA}+MP_{\sf DAC}(B)+MP_{\sf v}(B),
\end{align}
where $P_{\sf FPGA}$, $P_{\sf DAC}(B)$ and $P_{\sf v}(B)$ denote the power of the FPGA, a $B$-bit DAC, and a varactor diode with $2^B$ different bias voltages, respectively. From \cite{Ribeiro2018}, the DAC's power $P_{\sf DAC}(B)=1.5\times10^{-5}\cdot2^{B}+9\times10^{-12}\cdot B\cdot f_s$, where $f_s$ is the sampling frequency. The varactor-diode power{\footnote{Notice that $P_{\sf v}(B)$ is negligible in practice, since the current of a varactor diode in the reversely-biased (until reverse breakdown) working status is almost constant and very small (typically, tens of nanoAmperes (nA)).}} $P_{\sf v}(B)=\bbE_{v_{\sf b} \in \calV}[v_{\sf b} i_{\sf r}]$, where $\calV = \{V_1, V_2, \ldots, V_{2^B}\}$ is the set of designed bias voltages.

Hence, the EE-maximization optimization problem is formulated as
\begin{subequations}
\begin{align}
\text{(P2)}:\quad&\underset{\bm\rho, \mathbf p, \bm \Phi}{\max}  \quad\frac{R\left(\bm\rho, \mathbf p, \bm \Phi\right)}{\sum\limits_{k=1}^KP_k^c+\sum\limits_{n\in\calD}P_n^d+(K+2N+1)P_0+P_{\sf RIS}(B)} \label{eq:P3objective}\\
&\text{s.t.}\quad\eqref{eq:P2constraint1_new},\eqref{eq:P2constraint2_new},\eqref{eq:P2constraint3_new},\eqref{eq:P2constraint4_new},\eqref{eq:P2constraint5_new},\eqref{eq:P2constraint6_new},\eqref{eq:P2constraint7_new},
\end{align}
\end{subequations}
where $\sum\limits_{k=1}^KP_k^c$ and $\sum\limits_{n\in\calD}P_n^d$ are the total transmit power of CUs and D2D transmitters, respectively, and $P_0$ is the circuit-power consumption at each transmitter or receiver of the overall network.

The constraints of (P2) which are the same as in (P1) are non-convex, and the objective function of (P2) is a fractional non-convex function with respect to the power allocation variables $\bp$. Hence, there is no standard method to solve (P2).


\subsection{Solution to (P2)}
To solve (P2), we first determine $\bm{\rho}$ through the RCS-based user-pairing scheme. Then, the variables $\mathbf p$ and $\bm\Phi$ are decoupled through AO technique.

\subsubsection{Solution to Subproblems}
In each iteration $j$, for given reflecting coefficient matrix $\bm\Phi^{(j)}$, the transmit power vector $\mathbf p$ can be optimized by solving the following subproblem
\begin{subequations}
\begin{align}
\text{(P2.1)}:\quad &\underset{\mathbf p}{\max}\quad \frac{R\left(\mathbf p\right)}{\sum\limits_{k=1}^KP_k^c+\sum\limits_{n\in\calD}P_n^d+(K+2N+1)P_0+P_{\sf RIS}(B)} \label{eq:P3.1objective}\\
&\text{s.t.} \quad\eqref{eq:P2constraint1_new},\eqref{eq:P2constraint2_new},\eqref{eq:P2constraint5_new},\eqref{eq:P2constraint6_new}.
\end{align}
\end{subequations}

We utilize the SCA technique as mentioned before to tackle the non-convexity of the  numerator in \eqref{eq:P3.1objective}, then transform it through the fractional programming into a parametric subtractive form with an introduced parameter $\lambda$, and exploit Dinkelbach's method \cite{crouzeix1991algorithms} to obtain a solution of $\lambda$ and $\mathbf p$. The solving sub-algorithm based on Dinkelbach's method is summarized in Algorithm \ref{Dinekelbach}.

\begin{algorithm}[t!]
\caption{Proposed algorithm for solving (P2)}\label{Dinekelbach}
\begin{algorithmic}
\STATE \textbf{Step 1}: Initialize $\bp^{(0)}, \ \bPhi^{(0)}$, $\lambda^{(0)}=0$, permissible error $\delta=10^{-3}$, a small threshold constant $\epsilon=10^{-2}$. Let $i=0$, $j=0$. \\
\STATE \textbf{Step 2}: Exploit RCS-based pairing scheme to determine the resource reuse indictor vector $\bm \rho^{\star}$.
\REPEAT
\STATE \textbf{Step 3}: Solve problem (P2.1) for given $\bm\Phi^{(j)}$, and obtain the optimal solution as $\bp^{(j+1)}$.
\REPEAT
\STATE 3.1:\quad Solve the following optimization problem to obtain the optimal transmit power $\mathbf p^{*(i)}$:$\underset{\mathbf p} \max \quad f(\lambda^{(i)})=R^{\text{lb(i)}}-\lambda^{(i)}(\sum\limits_{k=1}^KP_k^{c(i)}+\sum\limits_{n\in\calD}P_n^{d(i)}+(K+2N+1)P_0+P_{\sf RIS}(B))$
\STATE 3.2:\quad Update the introduced parameter with $\lambda^{(i)}=\frac{R^{\text{lb(i)}(\mathbf p^{*(i)})}}{\sum\limits_{k=1}^KP_k^{c*(i)}+\sum\limits_{n\in\calD}P_n^{d*(i)}+(K+2N+1)P_0+P_{\sf RIS}(B)}$
\STATE 3.3:\quad Update iteration index $i=i+1$.
\UNTIL{$f(\lambda)<\delta$}
\RETURN $\bp^{(j+1)}=\bp^{*(i-1)}$.
\STATE \textbf{Step 4}: Solve problem (P2.2) for given $\bp^{(j+1)}$, and obtain the optimal solution as $\bm \Phi^{(j+1)}$.
\STATE  \textbf{Step 5}: Update iteration index $j=j+1$.
\UNTIL{The increase of objective value is smaller than $\epsilon$.}
\STATE  \textbf{Step 6}: Return the suboptimal solution  $\bm\rho^{\star}$, $\bp^{\star}=\bp^{(j-1)}$ and $\bPhi^{\star}=\bPhi^{(j-1)}$. 
\end{algorithmic}
\end{algorithm}


For given transmit power vector $\mathbf p^{(j)}$, the reflecting coefficient matrix $\bm\Phi$ can be optimized by solving the following subproblem
\begin{subequations}
\begin{align}
\text{(P2.2)}: \quad&\underset{\bm \Phi}{\max}\frac{R\left(\bm \Phi\right)}{\sum\limits_{k=1}^KP_k^{c(j)}+\sum\limits_{n\in\calD}P_n^{d(j)}+(K+2N+1)P_0+P_{\sf RIS}(B)} \label{eq:P2objective}\\
&\text{s.t.} \quad \eqref{eq:P2constraint1_new},\eqref{eq:P2constraint2_new},\eqref{eq:P2constraint7_new}.
\end{align}
\end{subequations}

Since the denominator in the objective function is a constant, this subproblem (P2.2) can be solved in the same way as in \ref{optimizePhi} to obtain a solution to $\bm\Phi$.

\subsubsection{Overall Algorithm and Analyses}

The overall algorithm for solving (P2) is summarized in Algorithm 2. Specifically, the Dinkelbach-based Algorithm \ref{Dinekelbach} is used to solve subproblem (P2.1) in Step 3, and the subproblem (P2.2) is solved in Step 4. The subproblem (P2.1) and (P2.2) are alternatively solved in each outer-layer iteration.

The convergence of Algorithm \ref{Dinekelbach} can be proved by similar steps as in the proof of Theorem 1, thus omitted herein, since the Dinkelbach's method converges superlinearly for nonlinear fractional programming problems\cite{crouzeix1991algorithms}.

The proposed Dinkelbach's method solves a convex optimization problem in each iteration, thus the complexity of each iteration is $\calO\left(\log_2{(\frac{1}{\epsilon})}\right)$. The complexity for sovling (P2.1) with permissible error $\delta$ is $\calO\left(\frac{1}{\delta^2}\log_2{(\frac{1}{\epsilon})}\log_2(N+K)\right)$ \cite{crouzeix1991algorithms}.  (P2.2) can be transformed as a convex QCQP which can be solved using interior point methods with complexity $\calO(M^{3.5})$\cite{hassanien2008robust}. Hence, the complexity of Algorithm 2 is $\calO\left(I_{\text{ite}}\left[\frac{1}{\delta^2}\log_2{(\frac{1}{\epsilon})}\log_2(N+K)+M^{3.5}\right]\right)$, where $I_{\text{ite}}$ denotes the number of outer-layer AO iterations.

\vspace{-0.3cm}
\section{NUMERICAL RESULTS}\label{simulation}
This section provides numerical results for the RIS-empowered D2D underlaying cellular network, which show significant performance enhancement of the proposed design as compared to the conventional underlaying D2D network without RIS and other benchmarks.

\vspace{-0.2cm}
\subsection{Simulation Setups}
Each channel response consists of a large-scale fading component and a small-scale fading component. Without loss of generality, the large-scale fading is distance-dependent and can be modeled as $Cd^{-\alpha}$, where $d$ is the distance between transmitter and receiver with unit of meter (m), $C=10^{-3}$ is the path loss at the reference distance of 1 m, and $\alpha$ is the path loss exponent of the channel. The path loss exponents from TXs/CUs to RXs/BS are 4, from IRS to BS is 2, the others of RIS-related channels are 2.2\cite{feng2013device}\cite{8741198}. The small-scale fading components of  $h_{l,i}$, $u_i$, $\tilde h_{k}$ and $v_{l,k}$ are considered as independently Rayleigh fading distributed, while  the small-scale fading components of ${\boldf }_i$, $\tilde{\boldf }_k$, $\mathbf{g}_l$ and $\tilde{\mathbf{g}}$ follow independent Rician fading distribution, i.e.,
\begin{align}
&{\boldf }_i=\sqrt{\frac{K_1}{K_1+1}} {\boldf }_{\text{L},i} + \sqrt{\frac{1}{K_1+1}} {\boldf }_{N,i}, \label{eq:channel1}
\end{align}
where $K_1$ is the Rician factor of ${\boldf }_i$, ${\boldf }_{\text{L},i}$ is the line of sight (LoS) component, and ${\boldf }_{\text{N},i}$ is the non-LoS (NLOS) component where each element follows CSCG distribution $\calC \calN(0,1)$. Similarly, $\tilde{\boldf }_k$, $\mathbf{g}_l$ and $\tilde{\mathbf{g}}$ are generated in the same way as ${\boldf }_i$ with Rician factors $K_2$, $K_3$ and $K_4$, respectively. Rician factors are set as the same value, i.e., $K_1=K_2=K_3=K_4=10$.

\begin{figure}[t!]
\centering
\includegraphics[width=0.5\columnwidth]{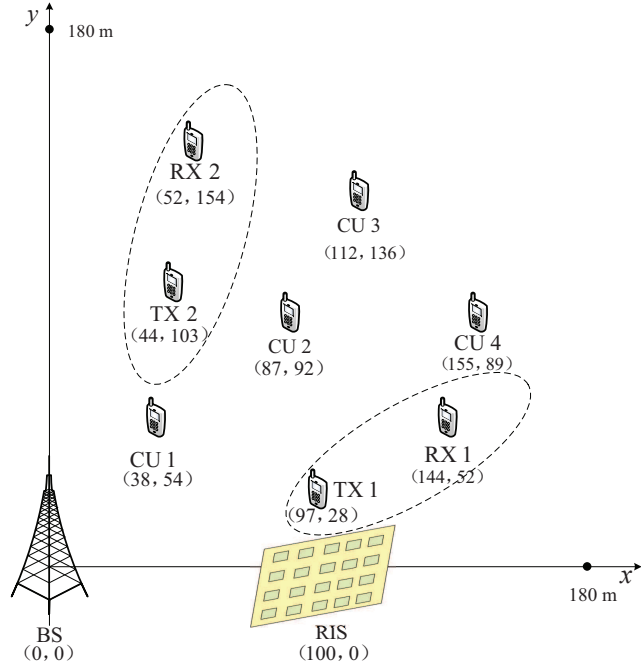}
\caption{Topology of RIS-empowered underlaying D2D communication network.} \label{fig:FigSim6}
\vspace{-0.1cm}
\end{figure}

We assume that the CUs are uniformly distributed in the cellular cell with radius $R=250$ m. We adopt the clustered distribution model in \cite{feng2013device} for D2D users, i.e., the clusters are randomly located in the cell, and each D2D link is uniformly distributed in one cluster with radius $r=60$ m. We set $K=4$ and $N=2$. The RIS is located between two D2D clusters. Using the above method, the locations of CUs and D2D users are generated by one realization, as illustrated in Fig. 2, and then fixed in all the simulations. The  coordinates of $\text{CU 1}$, $\text{CU 2}$, $\text{CU 3}$ and $\text{CU 4}$ are (38, 54), (87, 92), (112, 136) and (155, 89), respectively; the  coordinates of $\text{TX 1}$ and $\text{RX 1}$ are (97, 28), (144, 52), respectively; the  coordinates of $\text{TX 2}$ and $\text{RX 2}$ are (44, 103), (52, 154), respectively; the coordinate of  RIS is (100, 0). We set $\sigma^2=-114$dBm\cite{feng2013device}. Moreover, each antenna at the users is assumed to have an isotropic radiation pattern with 0 dB antenna gain, while each reflecting element of IRS is assumed to have 3 dB gain for fair comparison, since each IRS reflects signals only in its front half-space. The simulation results are based on 1000 channel realizations. Parameter settings are summarized in Table~\ref{table1}.
\begin{table}
\caption{Parameter settings.} \label{table1}
\footnotesize
\centering
\begin{tabular}{c|c}
  \hline
  Cellular cell radius & 250 m \\ \hline
  D2D cluster radius & $60$ m \\ \hline
  Number of reflecting elements & $M=200$ \\ \hline
  Number of CUs & $K=4$ \\ \hline
  Number of  D2D pairs & $N=2$ \\ \hline
  D2D TXs' maximum transmit power& $P_{\max}^d$=24 dBm \\ \hline
  CUs' maximum transmit power& $P_{\max}^c$=24 dBm \\ \hline
  D2D RXs' minimum SINR requirement& $R_{\min}^d$=0.3 bps/Hz \\ \hline
  CUs' minimum SINR requirement & $R_{\min}^c$=0.3 bps/Hz \\ \hline
  Noise power & $\sigma^2$=-114 dBm \\ \hline
  Channel realizations & 1000 \\ \hline
\end{tabular}
\end{table}



\subsection{Benchmark Schemes}
For comparison, we consider the following three benchmark schemes.
\subsubsection{Underlaying D2D Without RIS}
The traditional RIS-empowered D2D underlaying cellular network without RIS is considered. The SE and EE are maximized by jointly optimizing the resource reuse indicator $\bm{\rho}$ and transmit power vector $\mathbf{p}$. The solving algorithm in \cite{feng2013device} is used and omitted.
\subsubsection{Proposed Design with Random Reflecting Coefficients}
The proposed design without optimizing the reflecting coefficients matrix $\mathbf \Phi$ is considered. We maximize the SE and EE by jointly optimizing $\bm{\rho}$ and $\mathbf{p}$. All reflecting elements are set with random phase and maximal amplitude. This benchmark is used to show the benefit of passive beamforming optimization.
\subsubsection{RIS-Empowered D2D With Ideal User Pairing}
We exhaustively search over $A_K^N$ possible user-pairings, and jointly optimize $\bp$ as well as $\bPhi$ under each pairing. This benchmark gives achievable upper-bound performance of the RIS-empowered underlaying D2D network.


\subsection{Simulation Analyses for SE Maximization}
In this subsection, we evaluate the SE performance. We set $M=200$, $P_{\max}^d=P_{\max}^c=24$ dBm \cite{feng2013device}, $R_{\min}^d=\log_2 (1+\gamma_{\min}^d)=0.3$ bps/Hz, $R_{\min}^c=\log_2(1+\gamma_{\min}^c)=0.3$ bps/Hz, $P_{\max}^d=P_{\max}^c=P_{\max}$, if being not specified locally for some specific figure.

Fig.~\ref{fig:FigSim5} plots the SE versus the maximum transmit power $P_{\max}$ for the proposed design and different benchmarks. As shown in Fig.~\ref{fig:FigSim5}, the proposed design achieves significant SE enhancement compared to the first benchmark. For instance, the SE of the proposed design is $77.42\%$ and $73.8\%$ higher than that of the first benchmark, for $P_{\max}=20, 30$ dBm, respectively. Also, the proposed design outperforms the second benchmark without optimizing $\mathbf \Phi$, which shows the benefit of passive beamforming optimization. Compared to the third benchmark, the proposed design suffers from slight SE performance degradation, but obviously outperforms this benchmark in terms of computational complexity. The proposed design solves the joint-resource-allocation optimization problem only once, while this benchmark needs to solve such problem for $A_K^N$ times under all possible pairings, resulting into unaffordable complexity especially for large numbers of D2D or cellular links.

\begin{figure}[t!]
\begin{minipage}[t]{0.5\linewidth}
\centering
\includegraphics[width=1\textwidth]{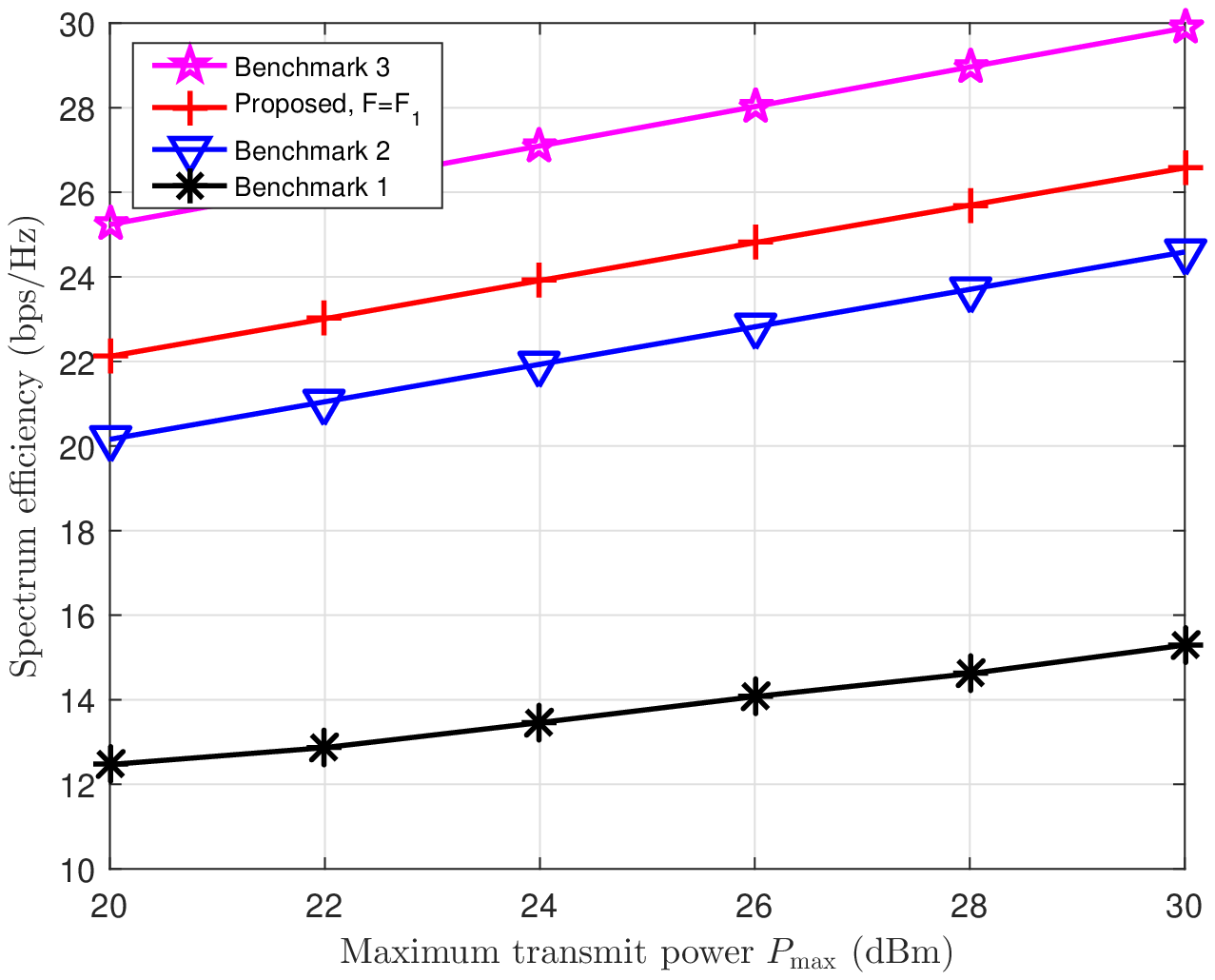}
\caption{SE versus $P_{\max}$  for proposed design and\\ different benchmarks.}
\label{fig:FigSim5}
\end{minipage}%
\begin{minipage}[t]{0.5\linewidth}
\centering
\includegraphics[width=1\textwidth]{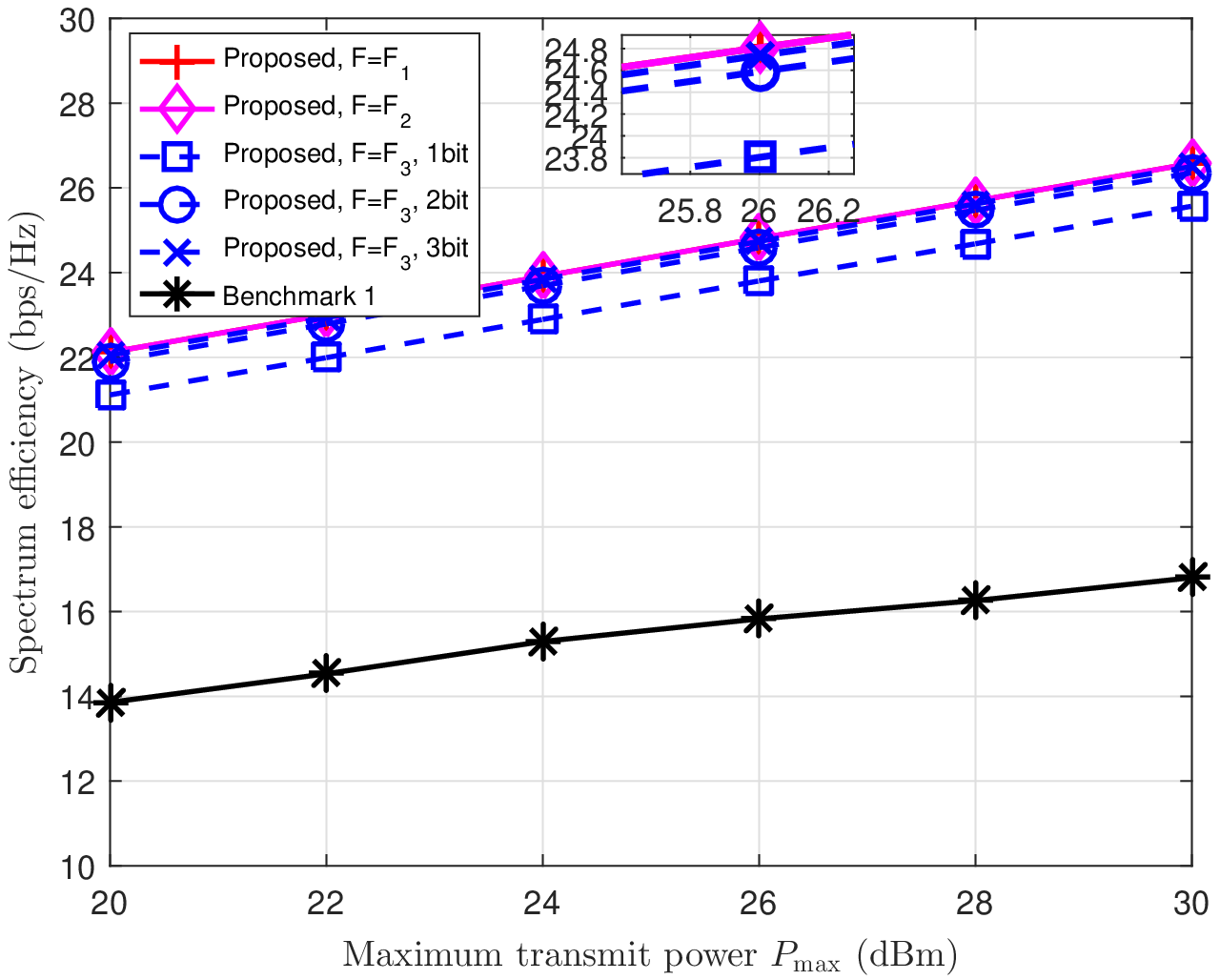}
\caption{SE versus $P_{\max}$ for proposed design with \\ different reflecting coefficients settings.}
\label{fig:FigSim6}
\end{minipage}
\vspace{-0.4cm}
\end{figure}

Fig.~\ref{fig:FigSim6} plots the SE versus the maximum transmit power $P_{\max}$ for the proposed design with different  reflecting coefficients settings. First, it is observed that the finite-resolution phase shifters of the reflecting elements usually degrades the SE performance. The SE increases with the increase of the phase-shift quantization bits $B$, since the increase of $B$ makes the setting of reflecting coefficients more accurate. In particular, the 2-bit phase shifter can obtain sufficiently high performance gain with a slight performance degradation compared to the ideal case of continuous phase shifters. Furthermore, the SE of the proposed design with $\calF=\calF_2$ is almost the same as the proposed design with $\calF=\calF_1$. The reason is as follows. As long as the reflecting amplitudes take the maximal value, channel strength enhancement and iter-link interference suppression can be achieved to the greatest extent by adjusting the reflecting phase shifts.

Fig.~\ref{fig:FigSim7} plots the SE versus the number of reflecting elements $M$ of the RIS. First, the SE of proposed designs increases as $M$ increases, since more reflecting elements can further enhance equivalent channel strength and suppress the inter-link interference; while the SE of first benchmark almost remains unchanged. Then, we observe that the gap between the proposed design and the second benchmark enlarges with the increase of  $M$, since more reflecting elements are well designed to achieve better performance for the proposed design, while the reflecting elements of the second benchmark stay  initial random values. Moreover, compared to the third benchmark with extremely high complexity, the proposed design achieves 79.67\% and 88.94\% SE performance of the third benchmark (upper bound) when $M$ is 50 and 400, respectively.


\begin{figure}[t!]
\begin{minipage}[t]{0.5\linewidth}
\centering
\includegraphics[width=1\textwidth]{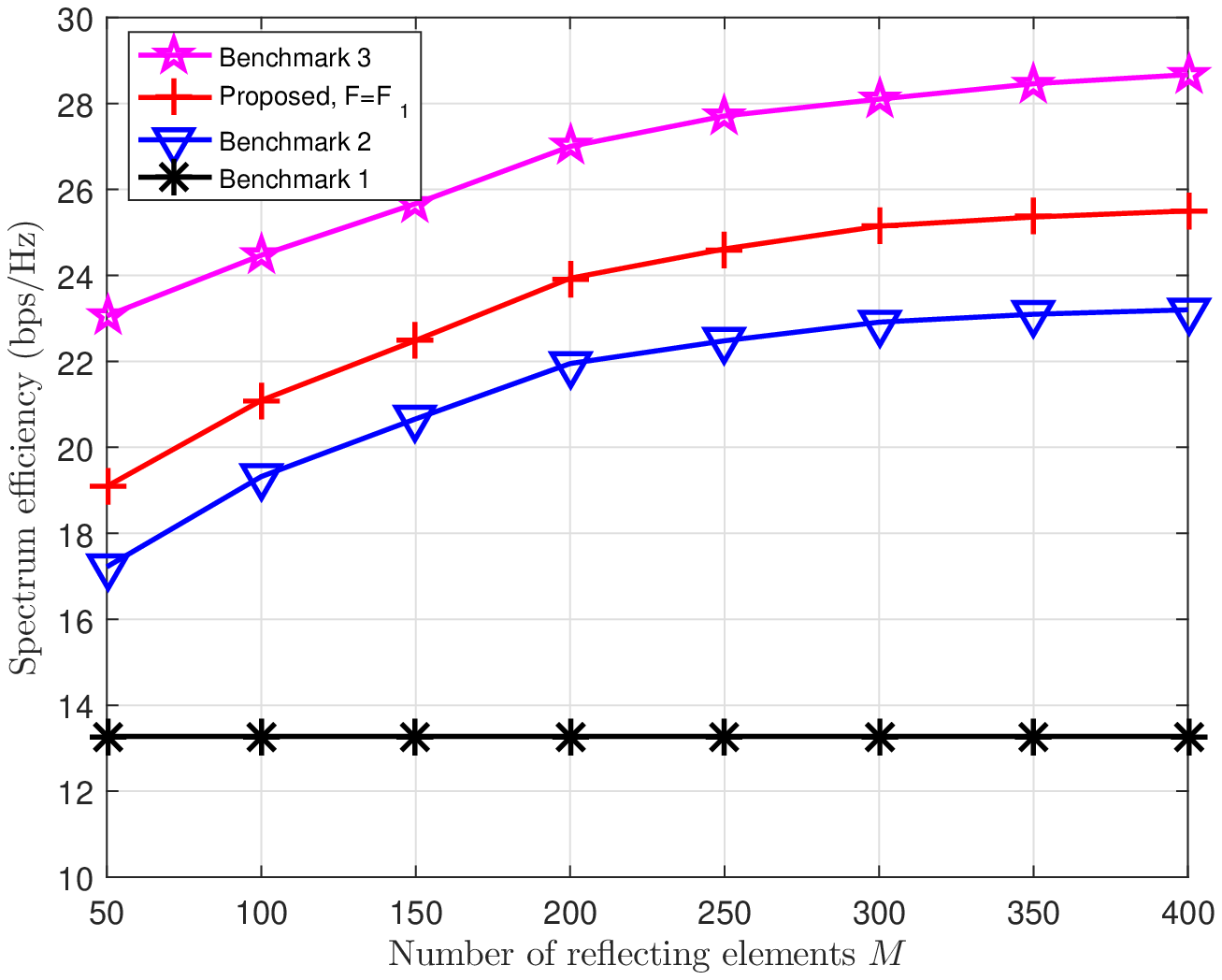}
\caption{SE versus RIS's number of reflecting \\elements $M$.}
\label{fig:FigSim7}
\end{minipage}%
\begin{minipage}[t]{0.5\linewidth}
\centering
\includegraphics[width=1\textwidth]{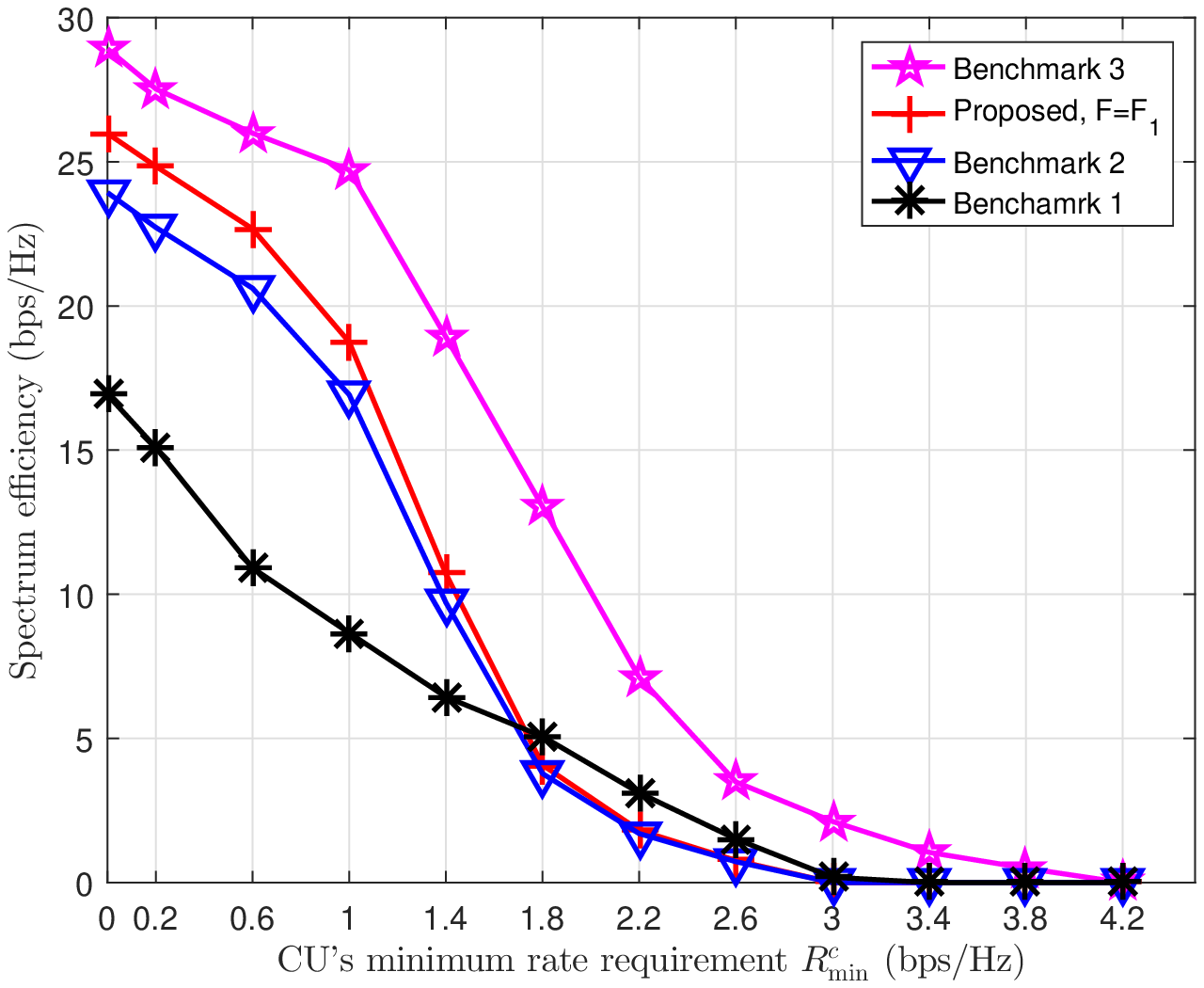}
\caption{SE versus CU's minimum rate require\\-ment $R_{\min}^c$.}
\label{fig:FigSim8}
\end{minipage}
\vspace{-0.4cm}
\end{figure}

\begin{figure}[t!]
\begin{minipage}[t]{0.5\linewidth}
\centering
\includegraphics[width=1\textwidth]{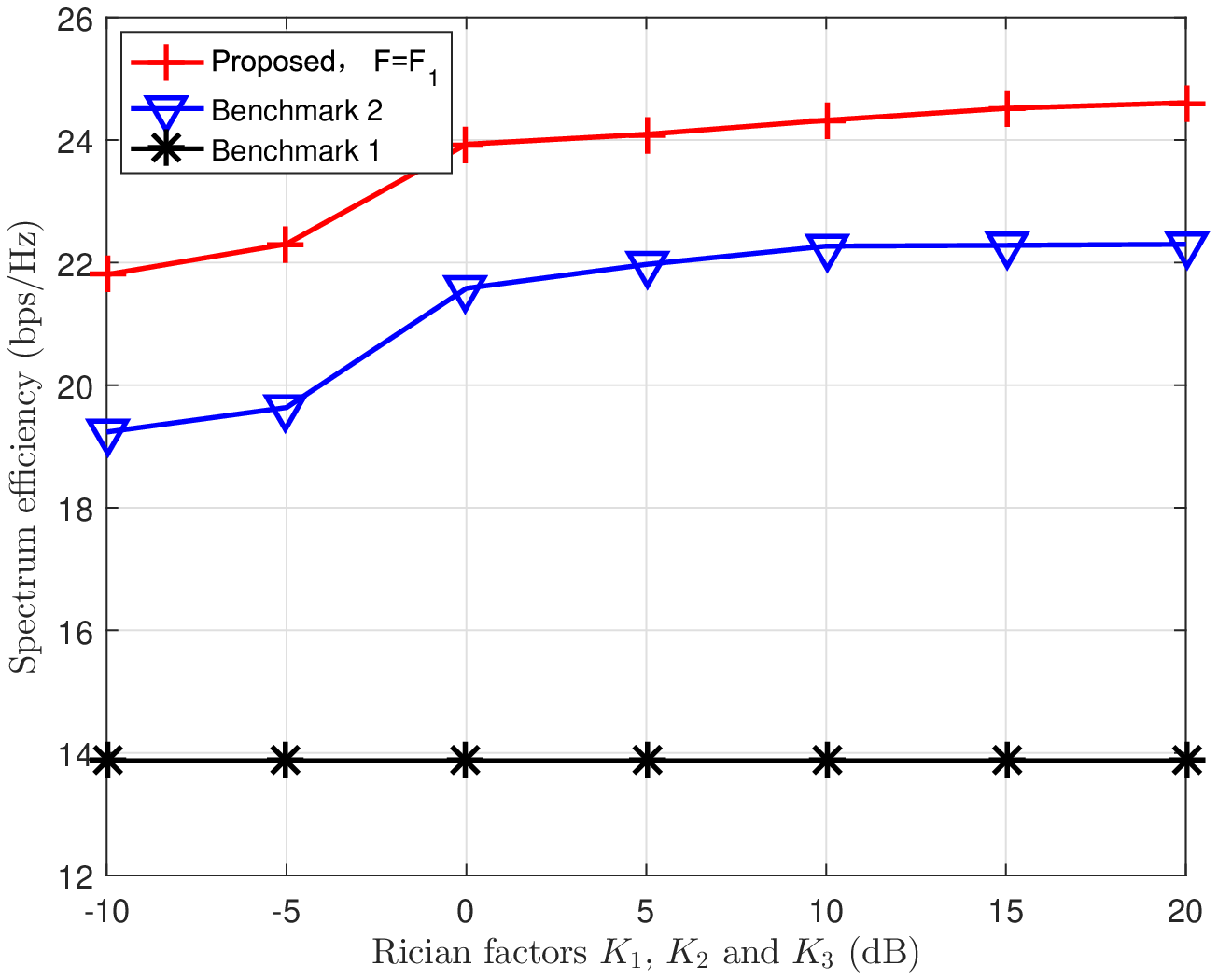}
\caption{SE versus Rician factors $K_1$, $K_2$ and $K_3$.}
\label{fig:FigSim11}
\end{minipage}%
\begin{minipage}[t]{0.5\linewidth}
\centering
\includegraphics[width=1\textwidth]{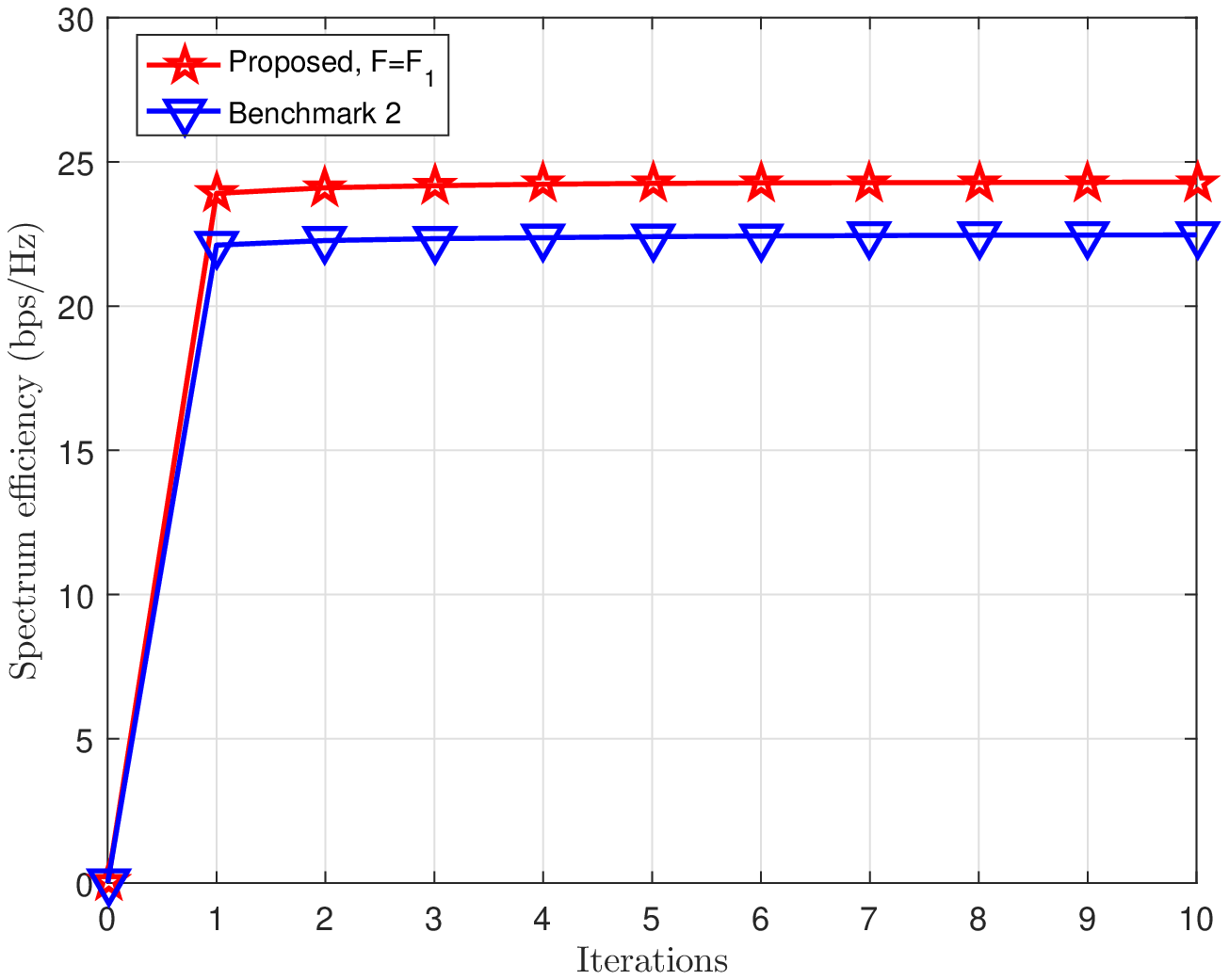}
\caption{Convergence of Algorithm \ref{AlgorithmP1} for (P1).}
\label{fig:FigSim9}
\end{minipage}
\vspace{-0.4cm}
\end{figure}

Fig.~\ref{fig:FigSim8} plots the SE versus the CUs' minimum rate requirement $R_{\min}^c$. The SE decreases as $R_{\min}^c$ increases, which reveals the rate tradeoff between D2D links and cellular links.  The proposed design achieves significant SE enhancement by introducing an RIS as compared to the first benchmark for $R_{\min}^c\leq 1.5$ bps/Hz. For $R_{\min}^c\geq 1.5$ bps/Hz, the first benchmark achieves higher SE as compared to the proposed design. The reason is that both the pairing scheme and the benefits introduced by the RIS affect the SE performance. When $R_{\min}^c$ is relatively small, the proposed design with a suboptimal pairing scheme is able to obtain a feasible solution, the performance enhancement comes from the well-designed RIS; in contrast, when $R_{\min}^c$ is too large, the formulated problem is not likely to be solved under a suboptimal pairing scheme, which results into worse performance of the proposed design as compared to the first benchmark with ideal pairing scheme.
\begin{figure} [!t]
	\centering	\includegraphics[width=.7\columnwidth]{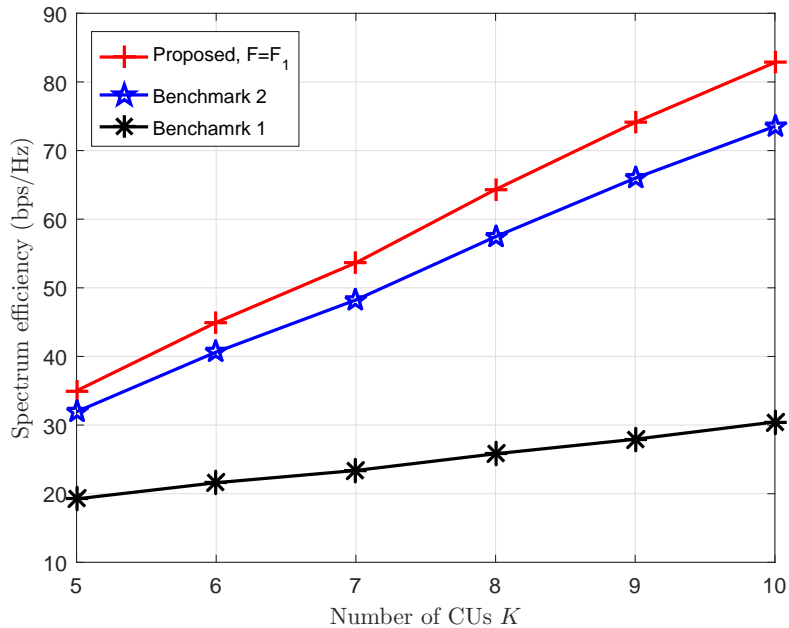}
	\caption{SE versus number of CUs $K$.} \label{fig:FigSim10}
\vspace{-0.4cm}
\end{figure}

Fig.~\ref{fig:FigSim11} plots the the SE versus the Rician factors $K_1$, $K_2$ and $K_3$.  It is observed that the SE of the proposed design increases as Rician factors increase, since the increase of $K_1, \ K_2$ and $K_3$ indicates that the LoS components between users and RIS are enhanced, thus reduces the depth of signal fading. The SE of first benchmark almost remains unchanged. Furthermore, the slope of the curve decreases with the increase of Rician factors. When Rician factors are small, NLoS path is dominant and the increase of Rician factors means the increase of LoS-path strength; when Rician factors are relatively large, LoS path is absolutely dominant thus the increase of Rician factors has less effect on the channel strength.

Fig.~\ref{fig:FigSim9} plots the average convergence performance of the proposed Algorithm \ref{AlgorithmP1}. We observe that the proposed scheme takes about five iterations to converge. The converged average SE is  24.34 bps/Hz and 22.482 bps/Hz for the proposed design and the second benchmark, respectively. Thus, the convergence speed of Algorithm \ref{AlgorithmP1} is fast.

Fig.~\ref{fig:FigSim10} plots the SE versus the number of CUs. We set the coordinates of added CUs as (99, 44), (122, 174), (138, 162), (74, 121), (56, 112) and (149, 78), respectively. The proposed design significantly outperforms the first benchmark  when the number of CUs is relatively large. For example, the SE of the proposed design is $81.82\%$ and $172\%$ higher than that of the first benchmark  when the number of CUs is 5 and 10 respectively. The reason is explained as follows. With the number of CUs increases, the SE performance for the first benchmark only benefits from the potential better pairing and the rate of added CUs, while the SE performance for the proposed design also benefits from the enhancement of the RIS.

\subsection{Simulation Analyses for EE Maximization}
\begin{figure}[t!]
\begin{minipage}[t]{0.5\linewidth}
\centering
\includegraphics[width=1\textwidth]{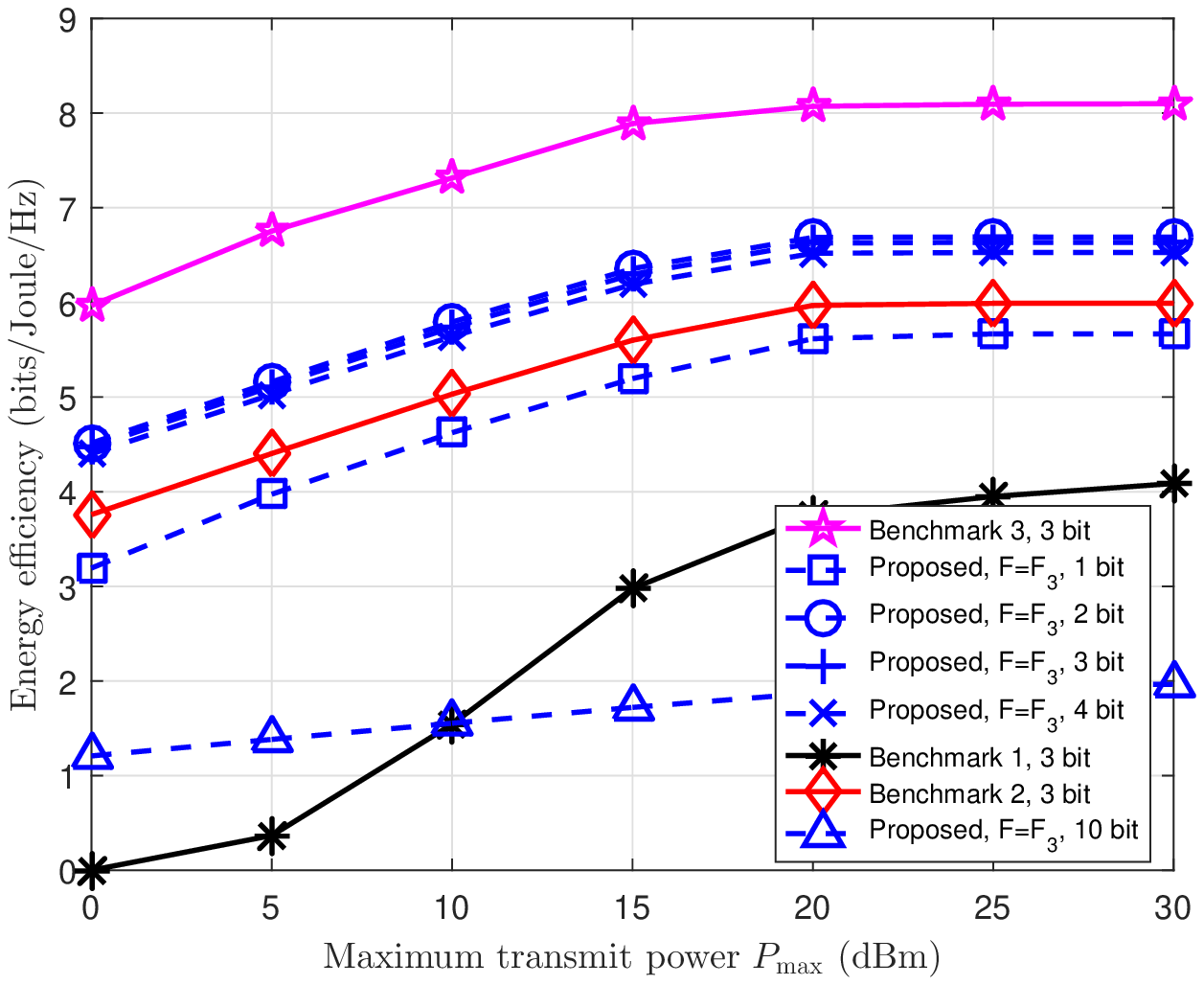}
\caption{EE versus D2D TXs' maximum trans\\-mit power $P_{\max}$.}
\label{fig:FigSim12}
\end{minipage}%
\begin{minipage}[t]{0.5\linewidth}
\centering
\includegraphics[width=1\textwidth]{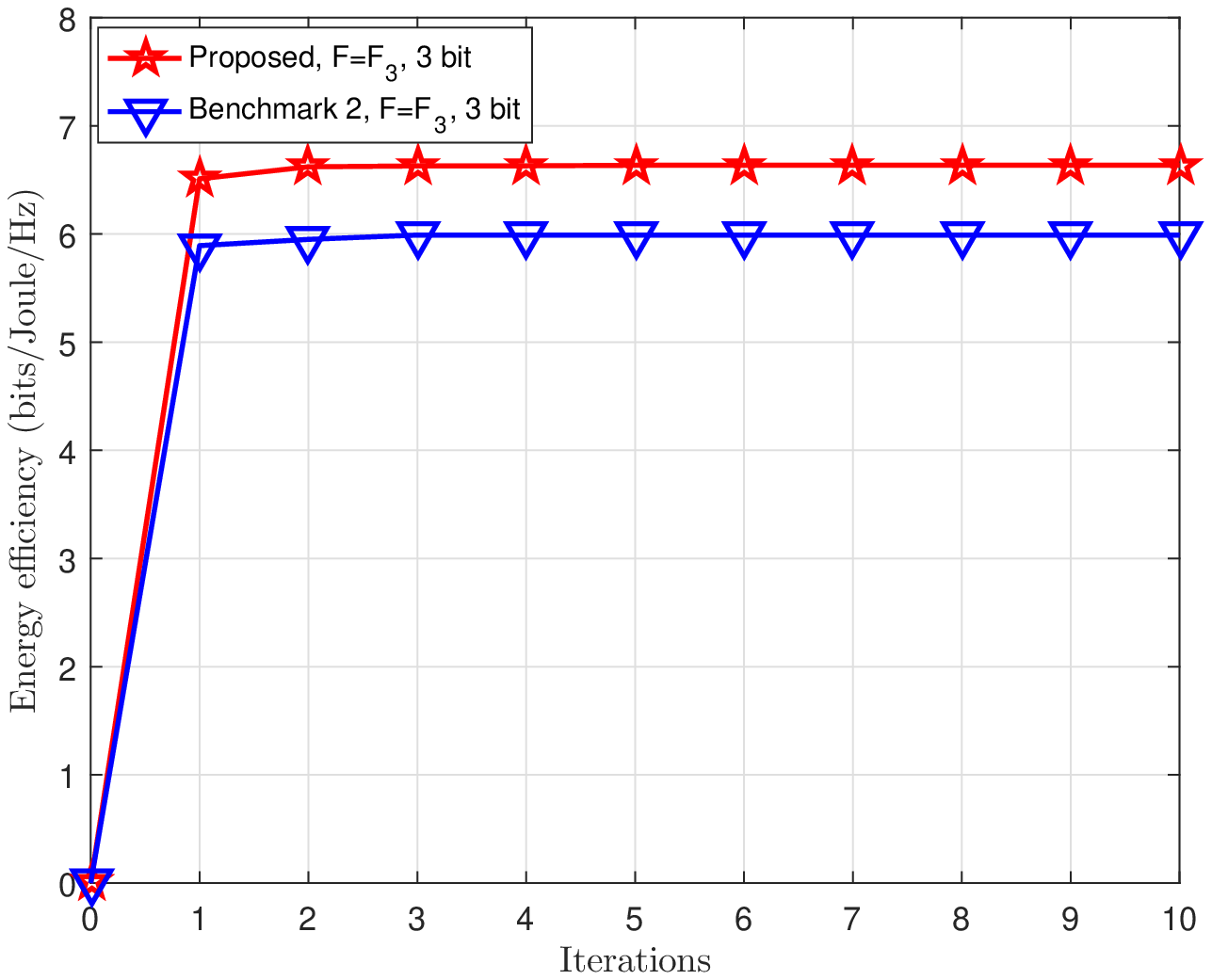}
\caption{Convergence of Dinkelbach-based Algorithm 2 for (P2).}
\label{fig:FigSim13}
\end{minipage}
\vspace{-0.4cm}
\end{figure}

In this subsection, we evaluate the EE performance. As in \cite{samithTCOM2020}, we take the diode SMV1231-079 with inverse current $i_{\sf r}$ less than 20 nA, and estimate the power $P_v(B)$ for different $B$'s. We take the typical Xilinx Spartan-7 FPGA for consideration, and use its typical power $P_{\sf FPGA}=1.188$W. The average power consumption of each reflecting element for different quantization bit $B$ with reflecting-element number $M$ is given in Table~\ref{table2}. In the simulations, we set $M=500$, $R_{\min}^d=R_{\min}^c$=0.55 bps/Hz, $P_{\sf r}=24$ dBm, and $f_s=10$ KHz. The bias voltages of varactor diode are chosen in the range of [0.12, 8.83]. Other settings remain unchanged as in subsection VI-C.
\begin{table}[h]
  \caption{Power consumption (mW) of each reflecting element.} \label{table2}
  \footnotesize
  \centering
  \begin{tabular}{c|c|c|c|c|c|c|c|c|c|c}
  \hline
  \diagbox{$M$}{$B$}&1&2&3&4&5&6&7&8&9&10\\\hline
  200& 5.970  & 6.000   & 6.060  & 6.180   & 6.420  & 6.900   & 7.860  & 9.780  & 13.620  & 21.300  \\ \hline                         500& 2.406  & 2.436   & 2.496  & 2.616   & 2.856  & 3.336   & 4.296  & 6.216  & 10.056  & 17.736  \\ \hline                                 1000& 1.218 & 1.248   & 1.308  & 1.428   & 1.668  & 2.148   & 3.108  & 5.028  & 8.868  & 16.548  \\ \hline
\end{tabular}
\end{table}

Fig.~\ref{fig:FigSim12} plots the EE versus the maximum transmit power $P_{\max}$. It is observed that EE increases as $P_{\max}$ increases first, then almost remains unchanged for large $P_{\max}$, since the increment of SE is not as fast as the increment of consumed power. The proposed design significantly outperforms the first benchmark with 3-bit phase shifts. As the number of phase-shift quantization bits $B$ increases, the EE increases first and then decreases slowly, with the optimal EE at $B=2$. The reason is that despite the increase of $B$ makes the setting of reflecting coefficients more accurate, but it results into higher power consumption simultaneously. When $B>1$, the SE improvement from RIS is not enough to compensate for the increase of power consumption as $B$ increases. In addition, the proposed design outperforms the second benchmark, due to the benefit of passive beamforming. Moreover, the proposed design suffers from slight EE performance degradation, but outperforms this benchmark in terms of computational complexity.

Fig.~\ref{fig:FigSim13} plots the average convergence performance of Dinkelbach-based Algorithm 2 for solving EE-maximization problem (P2). We observe that the proposed design takes about five iterations to converge. The converged average EE is  6.35 bps/Joule/Hz and 5.99 bps/Joule/Hz for the proposed design and the second benchmark, respectively. Thus, the convergence of Algorithm 2 is fast.

\vspace{-0.2cm}
\section{Conclusion}\label{conslusion}
This paper has studied an RIS-empowered underlaying D2D communication network. The overall network's spectrum efficiency (SE) and energy efficiency (EE) are maximized, respectively, by jointly optimizing the resource reuse indicators, the transmit power and the passive beamforming. First, an efficient relative-channel-strength based user-pairing scheme with low complexity is proposed to determine the resource reuse indicators. Then, the transmit power and the passive beamforming are jointly optimized for maximizing SE and EE, respectively, by utilizing the proposed alternating-optimization based iterative algorithms. Numerical results show that the proposed design achieves significant performance enhancement compared to traditional underlaying D2D network without RIS, and suffers from slight performance degradation compared to RIS-empowered underlying D2D with ideal user-pairing. This work can be extended to practical and complex scenarios such as multi-antenna BS/users, multiple RISs, and imperfect/partial channel state information.

\vspace{-0.2cm}
\bibliography{IEEEabrv,reference20190808}
\bibliographystyle{IEEEtran}

\end{document}